\newcommand{\cb}{\cellcolor{green!20}}
\newcommand{\inputs}{\ensuremath{\mathcal{I}}}
\newcommand{\outputs}{\ensuremath{\mathcal{O}}}
\newcommand{\mealy}{\ensuremath{\mathcal{M}}}
\newcommand{\sep}{\ensuremath{\ \ | \ \ }}
\newcommand{\nats}{\ensuremath{\mathbb{N}}}
\newcommand{\widx}[2]{\ensuremath{#1_{#2}}}
\newcommand{\set}[1]{\ensuremath{\{#1\}}}
\newcommand{\lang}{\ensuremath{\mathcal{L}}} 
\newcommand{\size}[1]{|#1|}
\newcommand{\aut}{\mathcal{A}}
\newcommand{\bigo}{\!\mathit{\raisebox{-0.5pt}{\textit{O}}}}
\newcommand{\trans}[3]{\ensuremath{\textsc{trans}(#1,#2,#3)}} 
\newcommand{\rgstate}[2]{\ensuremath{\textsc{rgstate}(#1,#2)}}
\newcommand{\tlabel}[3]{\ensuremath{\textsc{label}(#1,#2,#3)}}  
\newcommand{\annotationd}[3]{\ensuremath{\textsc{annotation}(#1,#2,#3)}}  
\newcommand{\annotation}[2]{\ensuremath{\textsc{annotation}(#1,#2)}}  
\newcommand{\edge}[2]{\ensuremath{\textsc{edge}(#1,#2)}}  
\newcommand{\redge}[2]{\ensuremath{\textsc{redge}(#1,#2)}}  
\newcommand{\bedge}[2]{\ensuremath{\textsc{bedge}(#1,#2)}}  
\newcommand{\wtreed}[2]{\ensuremath{\textsc{wtree}(#1,#2)}}  
\newcommand{\wtree}[1]{\ensuremath{\textsc{wtree}(#1)}}  
\newcommand{\rboundd}[2]{\ensuremath{\textsc{rbound}(#1,#2)}}  
\newcommand{\rbound}[1]{\ensuremath{\textsc{rbound}(#1)}}  
\newcommand{\forwardedge}[3]{\ensuremath{\textsc{forward}(#1,#2,#3)}}  
\newcommand{\backwardedge}[3]{\ensuremath{\textsc{backward}(#1,#2,#3)}}  
\newcommand{\frankd}[3]{\ensuremath{\textsc{frank}(#1,#2,#3)}}  
\newcommand{\brankd}[3]{\ensuremath{\textsc{brank}(#1,#2,#3)}}  
\newcommand{\frank}[2]{\ensuremath{\textsc{frank}(#1,#2)}}  
\newcommand{\brank}[2]{\ensuremath{\textsc{brank}(#1,#2)}}  
\newcommand{\allowed}[2]{\ensuremath{\textsc{visited}(#1,#2)}}  
\newcommand{\tree}{\ensuremath{\mathcal{T}}} 
\newcommand{\sccd}[3]{\ensuremath{\textsc{scc}(#1,#2,#3)}}  
\newcommand{\scc}[2]{\ensuremath{\textsc{scc}(#1,#2)}}  
\newcommand{\curlyF}{\ensuremath{\mathcal{F}}}
\title{Bounded Cycle Synthesis\thanks{Partially supported by the DFG project ``AVACS'' (SFB/TR 14). The
second author was supported by an IMPRS-CS PhD Scholarship.}}
\author{Bernd Finkbeiner and Felix Klein}
\institute{Reactive Systems Group, Saarland University, Germany\\
\email{\{finkbeiner,fklein\}@cs.uni-saarland.de}}
\begin{document}
	
\maketitle

\begin{abstract}
  
We introduce a new approach for the
synthesis of Mealy machines from specifications in linear-time
temporal logic (LTL), where the number of cycles in the state graph
of the implementation is limited by a given bound.
Bounding the number of
cycles leads to implementations that are structurally simpler and
easier to understand.
We solve the synthesis problem via an extension of
SAT-based bounded synthesis, where we additionally construct a witness structure that limits the number of cycles.
We also establish a triple-exponential upper and lower bound
for the potential blow-up between the length of the LTL formula and the number of
cycles in the state graph.

%The synthesis of reactive systems concerns the automatic creation of
%an implementation from a high level specification such that the
%synthesized solution satisfies the specification. Towards this end, 
%many theoretical foundations and several tools have been developed
%during the last decades.
%
%Nevertheless, many of them only guarantee the functional correctness
%of the solution. However, for practical applications this is
%often just one requirement out of many. Hence, other quality measures
%are requested, and need to be supported by the synthesizer.
%
%A recent work into this direction is Bounded Synthesis, which allows
%to bound the size of the resulting implementation. We continue on this
%idea with the aim to improve the behavioral structure of the
%solution as well. We present Bounded Cycle Synthesis, which
%additionally ensures a bound on the number of cycles of the
%synthesized system.

\end{abstract}

\section{Introduction}
\label{sec_introduction}
\input{introduction}

\section{Preliminaries}
\label{sec_preliminaries}
The non-negative integers are denoted by $ \nats $. An
alphabet~$ \Sigma $ is a non-empty finite set. $ \Sigma^{\omega} $
denotes the set of infinite words over $ \Sigma $. If
$ \alpha \in \Sigma^{\omega} $, then $ \widx{\alpha}{n} $ accesses the
$ n $-th letter of~$ \alpha $, starting at $ \alpha_{0} $. For
the rest of the paper we assume $ \Sigma = 2^{\inputs \cup \outputs} $
to be partitioned into sets of input signals~$ \inputs $ and output
signals~$ \outputs $.

A \textit{Mealy machine}~$ \mealy $ is a tuple
$ (\inputs,\outputs, T, t_{I}, \delta, \lambda) $ over input signals
$ \inputs $ and output signals $ \outputs $, where $ T $ is a finite
set of states, $ t_{I} \in T $ is the initial state,
$ \delta \colon T \times 2^{\inputs} \rightarrow T $ is the transition
function, and $ \lambda \colon T \times 2^{\inputs} \rightarrow 2^{\outputs} $
is the output function. Thereby, the output only depends on the
current state and the last input letter. The size of $ \mealy $,
denoted by $ \size{\mealy} $, is defined as $ \size{T} $. A path~$ p $
of a Mealy machine~$ M $ is an infinite sequence
$ p = (t_{0},\sigma_{0})(t_{1},\sigma_{1})(t_{2},\sigma_{2}) \ldots
\in (T \times \Sigma)^{\omega} $
such that $ t_{0} = t_{I} $,
$ \delta(t_{n},\inputs \cap \sigma_{n}) = t_{n+1} $ and
$ \lambda(t_{n},\inputs \cap \sigma_{n}) = \outputs \cap \sigma_{n} $
for all $ n \in \nats $. We use
$ \pi_{1}(p) = \sigma_{0}\sigma_{1}\sigma_{2}\ldots \in
\Sigma^{\omega} $,
to denote the projection of $ p $ to its second component.
$ \mathcal{P}(\mealy) $ denotes the set of all paths of a Mealy
machine~$ \mealy $.

Specifications are given in \textit{Linear-time Temporal Logic} (LTL). The atomic propositions of the logic consist of the signals $\inputs \cup \outputs$, resulting in the alphabet $\Sigma = 2^{\inputs \cup \outputs}$. The syntax
of an LTL specification~$ \varphi $ is defined as follows:

\medskip

\noindent \quad \ \  $
  \varphi \ \ := \ \ \textit{true} \sep  a \in \inputs \cup \outputs
  \sep \neg \varphi \sep \varphi \vee \varphi
  \sep \LTLnext \varphi \sep \varphi \LTLuntil \varphi
$

\medskip

\noindent The size of a specification~$ \varphi $ is denoted by
$ \size{\varphi} $ and is defined to be the number of sub-formulas of
$ \varphi $. The semantics of LTL are defined over infinite
words~$ \alpha \in \Sigma^{\omega} $. We define the satisfaction
of a word~$ \alpha $ at a position $ n \in \nats $ and a
specification~$ \varphi $, denoted by $ \alpha, n \vDash \varphi $,
for the different choices of $ \varphi $, respectively, as follows:

\begin{itemize}

\item $ \alpha, n \vDash \textit{true} $ 

\item $ \alpha, n \vDash a $ \ iff \ $ a \in \alpha_{i} $

\item $ \alpha, n \vDash \neg \varphi $ \ iff \
  $ \alpha, n \not\vDash \varphi $

\item $ \alpha, n \vDash \varphi_{1} \vee \varphi_{2} $ \ iff \
  $ \alpha, n \vDash \varphi_{1} $ or $ \alpha, i \vDash \varphi_{2} $

\item $ \alpha, n \vDash \LTLnext \varphi $ \ iff \
  $ \alpha, n + 1 \vDash \varphi $

\item $ \alpha, n \vDash \varphi_{1} \LTLuntil \varphi_{2} $ \ iff \
  $ \exists m \geq n.\ \alpha, m \vDash \varphi_{2} $ and
  $ \forall n \leq i < m.\ \alpha, i \vDash \varphi_{1} $

\end{itemize}

\noindent An infinite word~$ \alpha $ satisfies $ \varphi $, denoted
by $ \alpha \vDash \varphi $, iff $ \alpha, 0 \vDash \varphi $.  The
language~$ \lang(\varphi) $ is the set of all words that satisfy
$ \varphi $, i.e.,
$ \lang(\varphi) = \set{ \alpha \in \Sigma^{\omega} \mid \alpha
  \vDash \varphi } $.
Beside the standard operators, we have the standard derivatives of the boolean 
operators, as well as
\mbox{$ \LTLeventually \varphi \ \equiv \ \text{\textit{true}} \LTLuntil
\varphi $}
and
\mbox{$ \LTLglobally \varphi \ \equiv \ \neg \LTLeventually \neg \varphi $}.
A Mealy machine~$ \mealy $ is an implementation of
$ \varphi $ iff
$ \pi_{1}(\mathcal{P}(\mealy)) \subseteq \lang(\varphi) $.

%\begin{itemize}

% \item conjunction: \quad
%   $ \varphi_{1} \wedge \varphi_{2} \ \equiv \ \neg(\neg \varphi_{1} \vee
%   \neg \varphi_{2}) $

% \item implication: \quad
%   $ \varphi_{1} \rightarrow \varphi_{2} \ \equiv \ \neg \varphi_{1} \vee
%   \varphi_{2} $

% \item equivalence: \quad
%   $ \varphi_{1} \leftrightarrow \varphi_{2} \ \equiv \ (\varphi_{1}
%   \rightarrow \varphi_{2}) \wedge (\varphi_{2} \rightarrow
%   \varphi_{1}) $

% \item finally: \quad
%   $ \LTLeventually \varphi \ \equiv \ \text{\textit{true}} \LTLuntil
%   \varphi $

% \item globally: \quad
%   $ \LTLglobally \varphi \ \equiv \ \neg \LTLeventually \neg \varphi $

% \item release: \quad
%   $ \varphi_{1} \LTLrelease \varphi_{2} \ \equiv \ \neg(\neg \varphi_{1}
%   \LTLuntil \neg \varphi_{2}) $

% \item weak until: \quad
%   $ \varphi_{1} \LTLweakuntil \varphi_{2} \ \equiv \ (\varphi_{1}
%   \LTLuntil \varphi_{2}) \vee \LTLglobally \varphi_{1} $

%\end{itemize}

Let $ G = (V,E) $ be a directed graph. A \textit{(simple) cycle}~$ c $
of $ G $ is a a tuple~$ (C,\eta) $, consisting of a non-empty set~$ C
\subseteq V $ and a bijection $ \eta \colon C \mapsto C $ such that 

\begin{itemize}

\item $ \forall v \in C.\ (v,\eta(v)) \in E $ and 

\item $ \forall v \in C.\ n \in \nats.\ \eta^{n}(v) = v
  \ \Leftrightarrow \ n \!\!\mod \size{C} = 0 $,

\end{itemize}

\noindent where $ \eta^{n} $ denotes $ n $ times the application of
$ \eta $.  In other words, a cycle of $ G $ is a path through $ G $
that starts and ends at the same vertex and visits every vertex of
$ V $ at most once. We say that a cycle~$ c = (C,\eta) $ has length
$ n $ iff $ \size{C} = n $.

We extend the notion of a cycle of a graph~$ G $ to Mealy
machines~$ \mealy = (\inputs, \outputs, T, t_{I}, \delta, \lambda) $, such that
$ c $ is a cycle of $ \mealy $ iff $ c $ is a cycle of the graph
$ (T,E) $ for
$ E = \set{ (t,t') \mid \exists \nu \in 2^{\inputs}.\ \delta(t,\nu) =
  t } $.
Thus, we ignore the input labels of the edges of $ \mealy $.  The set
of all cycles of a Mealy machine~$ \mealy $ is denoted by
$ \mathcal{C}(\mealy) $.

A \textit{universal co-Büchi automaton}~$ \aut $ is a
tuple~$ (\Sigma,Q,q_{I},\Delta,R) $, where $ \Sigma $ is the alphabet,
$ Q $ is a finite set of states, $ q_{0} \in Q $ is the initial state,
$ \Delta \subseteq Q \times \Sigma \times Q $ is the transition
relation and $ R \subseteq Q $ is the set of rejecting states. A
run~$ r = (q_{0},\sigma_{0})(q_{1},\sigma_{1})(q_{2},\sigma_{2})\ldots
\in (Q \times \Sigma)^{\omega} $
of $ \aut $ is an infinite sequence such that $ q_{0} = q_{I} $ and
$ (q_{n},\sigma_{n},q_{n+1}) \in \Delta $ for all $ n \in \nats $. A
run~$ r $ is accepting if it has a suffix
$ q_{n}q_{n+1}q_{n+2}\ldots \in (Q \setminus R)^{\omega} $ for some
$ n \in \nats $. An infinite word $ \alpha \in \Sigma^{\omega} $
is accepted by $ \aut $ if all corresponding runs, i.e., all
runs~$ r =
(q_{0},\sigma_{0})(q_{1},\sigma_{1})(q_{2},\sigma_{2})\ldots $
with $ \alpha = \sigma_{0}\sigma_{1}\sigma_{2}\ldots $, are accepting.
The language $ \lang(\aut) $ of $ \aut $ is the set of all
$ \alpha \in \Sigma^{\omega} $, accepted by $ \aut $.

The \textit{run graph~$ G $} of a universal co-Büchi
automaton~$ \aut = (2^{\inputs \cup \outputs},Q,q_{I},\Delta,R) $ and
a Mealy
machine~$ \mealy = (\inputs, \outputs, T, t_{I}, \delta, \lambda) $ is
a directed graph~$ G = (T \times Q,E) $, with
$ E = \set{ ((t,q),(t',q')) \mid \exists \sigma.\
  \delta(t,\inputs \cap \sigma) = t',\, \lambda(t,\inputs \cap \sigma)
  = \mbox{\ensuremath{\outputs \cap \sigma}},\, (q,\sigma,q') \in
  \Delta } $.
A vertex $ (t,q) $ of $ G $ is rejecting iff $ q \in R $. A run graph
is accepting iff there is no cycle of $ G $, which contains a
rejecting vertex. If the run graph is accepting, we say, $ \mealy $ is
accepted by $ \aut $.

\section{Bounds on the number of cycles}
\label{sec_bounds}
Our goal is to synthesize systems that have a simple structure. System
quality most certainly has other dimensions as well, but structural
simplicity is a property of interest for most applications.

The purpose of this section is to give theoretical arguments why the
number of cycles is a good measure: we show that the number of cycles
may explode even in cases where the number of states is small, and
even if the specification enforces a large implementation, there may
be a further explosion in the number of cycles. This indicates that
bounding the number of cycles is important, if one wishes to have a
structurally simple implementation. On the other hand, we observe that
bounding the number of states alone is not sufficient in order to
obtain a simple structure.

Similar observations apply to modern programming languages, which tend
to be much better readable than transition systems, because their
control constructs enforce a simple cycle structure. Standard
synthesis techniques construct transition systems, not programs, and
therefore lose this advantage. With our approach, we get closer to the
control structure of a program, without being restricted to a specific
programming language.

\subsection{Upper bounds}

First, we show that the number of cycles of a Mealy
machine~$ \mealy $, implementing an LTL specification~$ \varphi $, is
bounded triply exponential in the size of $ \varphi $. To this end, we
first bound the number of cycles of an arbitrary graph~$ G $ with
bounded outdegree. 

On graphs with arbitrary outdegree, the maximal number of cycles is
given by a fully connected graph, where each cycle describes a
permutation of states, and vice versa. Hence, using standard math we
obtain an upper bound of $ 2^{n \log n} $ cycles for a graph with
$ n $ states. However, our proof uses a more involved argument to
improve the bound even further down to $ 2^{n \log(m + 1)} $ for
graphs with bounded outdegree~$ m $. Such an improvement is desirable,
as for LTL the state graph explodes in the number of states, while the
outdegree is constant in the number of input and output signals.

\begin{lemma}
  Let $ G = (V,E) $ be a directed graph with $ \size{V} = n $ and with
  maximal outdegree $ m $. Then $ G $ has at most
  $ 2^{n \log (m + 1)} $ cycles.
  \label{lem:graph_bound}
\end{lemma}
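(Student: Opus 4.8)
The plan is to bound the number of cycles by counting, for each cycle, a compact \emph{signature} that determines it, and then bounding the number of possible signatures. A cycle $c = (C,\eta)$ of $G$ is completely determined by the set $C \subseteq V$ together with, for each vertex $v \in C$, the choice of which outgoing edge of $v$ is used (i.e.\ the value $\eta(v)$ among the at most $m$ successors of $v$). So I would first observe that a cycle is determined by a function that assigns to every vertex of $V$ one of $m+1$ values: either ``not in $C$'', or one of its (at most $m$) outgoing edges. This already gives $(m+1)^n = 2^{n\log(m+1)}$ as an upper bound on the number of such functions, hence on the number of cycles. The only subtlety is that not every such function is the signature of a cycle (it must actually form a single simple cycle, and the chosen successor must lie in $C$), but that only makes the true count smaller, so the bound still holds.

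Concretely, I would fix for each vertex $v$ an arbitrary linear order on its set of outgoing edges $\{(v,w) \in E\}$, which has size at most $m$; write $\mathrm{out}(v)$ for this set. Define the map $\Phi$ sending a cycle $c=(C,\eta)$ to the function $f_c \colon V \to \{0,1,\dots,m\}$ where $f_c(v) = 0$ if $v \notin C$, and otherwise $f_c(v) = j$ where $(v,\eta(v))$ is the $j$-th edge in the fixed order on $\mathrm{out}(v)$ (so $1 \le j \le |\mathrm{out}(v)| \le m$). I would then argue $\Phi$ is injective: from $f_c$ we recover $C = \{v : f_c(v) \neq 0\}$ and, for $v \in C$, we recover $\eta(v)$ as the target of the $f_c(v)$-th outgoing edge of $v$; since a cycle is by definition exactly the data $(C,\eta)$, this reconstructs $c$. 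Hence $|\mathcal{C}(G)| \le |\{0,\dots,m\}^V| = (m+1)^{n}$, and $(m+1)^n = 2^{n \log(m+1)}$.

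I do not expect a serious obstacle here; the argument is a clean injection-and-count. The one place that needs a word of care is that $\eta$ must be a bijection $C \to C$ and not merely a function $C \to V$ — but this is automatically handled, since $\Phi$ is still injective as a map out of the set of genuine cycles regardless of how many functions in the codomain fail to be in the image. If one wanted the slightly sharper statement suggested by the paper's remark (that the fully connected case gives permutations and the $2^{n\log n}$ bound, improved to $2^{n\log(m+1)}$ for outdegree $m$), the same signature argument delivers it directly, with no need for a separate permutation-counting argument. So the proof is essentially: define the signature, check injectivity, count.
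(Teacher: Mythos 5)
Your proof is correct, and it takes a different route from the paper's. The paper argues by induction on $n$: it removes an arbitrary vertex $v$, invokes the induction hypothesis to bound the cycles of $G - v$ by $2^{(n-1)\log(m+1)}$, and then bounds the cycles through $v$ by $m$ times that same quantity, via an informal ``redirection'' step (each cycle through $v$ is charged to one of the at most $m$ successors $v'$ of $v$ and to a cycle of $G-v$ through $v'$; a footnote concedes this is a worst-case overapproximation, since the closing edge need not exist in the concrete graph). You instead give a direct injection: each cycle $(C,\eta)$ is encoded by a function $V \to \{0,\dots,m\}$ recording membership in $C$ and the index of the chosen outgoing edge, and the cycle is recoverable from this signature, so the number of cycles is at most $(m+1)^n = 2^{n\log(m+1)}$. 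Your argument is, if anything, tighter and easier to verify rigorously --- it avoids both the induction and the somewhat hand-wavy charging step, and the injectivity claim is immediate from the definition of a cycle as the pair $(C,\eta)$. What the paper's inductive decomposition buys is a closer structural correspondence to Tiernan's algorithm (peel off a vertex, count cycles through it, recurse on the rest), which is the template for the witness-tree construction later in the paper; but as a proof of this lemma alone, your version is at least as good.
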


\begin{proof}
  We show the result by induction over $ n $. The base case is
  trivial, so let $ n > 1 $ and let $ v \in V $ be some arbitrary
  vertex of $ G $.  By induction hypothesis, the subgraph $ G' $,
  obtained from $ G $ by removing $ v $, has at most
  $ 2^{(n - 1) \log (m+1)} $ cycles. Each of these cycles is also a
  cycle in $ G $, thus it remains to consider the cycles of $ G $
  containing $ v $. In each of these remaining cycles, $ v $ has one
  of $ m $ possible successors in $ G' $ and from each such successor
  $ v' $ we have again $ 2^{(n-1) \log (m+1)} $ possible cycles in
  $ G' $ returning to $ v' $. Hence, if we \textit{redirect} these
  cycles to $ v $ instead of $ v' $, i.e., we insert $ v $ before
  $ v' $ in the cycle, then we cover all possible cycles of $ G $
  containing $ v $\footnote{Note that not every such edge needs to exist for
    a concrete given graph. However, in our worst-case analysis,
    every possible cycle is accounted for.}. All
  together, we obtain an upper bound of
  $ 2^{(n - 1) \log (m+1)} + m \cdot 2^{(n - 1) \log (m+1)} = 2^{n
    \log (m+1)} $ cycles in $ G $. \qed
\end{proof}

\noindent We obtain an upper bound on the number of cycles of a Mealy
machine~$ \mealy $.

\begin{lemma}
  Let $ \mealy $ be a Mealy machine. Then
  $ \size{\mathcal{C}(\mealy)} \in \bigo(2^{\size{\mealy} 
    \cdot \size{\inputs}}) $.
  \label{lem:mealy_bound}
\end{lemma}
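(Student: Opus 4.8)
The plan is to apply Lemma~\ref{lem:graph_bound} to the graph underlying the Mealy machine and then bound its outdegree in terms of $\size{\inputs}$. Recall that the cycles of a Mealy machine $\mealy = (\inputs,\outputs,T,t_I,\delta,\lambda)$ are, by definition, exactly the cycles of the directed graph $(T,E)$ with $E = \set{ (t,t') \mid \exists \nu \in 2^{\inputs}.\ \delta(t,\nu) = t' }$. So it suffices to analyse this one graph.

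First I would observe that $\size{T} = \size{\mealy}$, so the graph has $n = \size{\mealy}$ vertices. Next I would bound the outdegree: from any state $t$, the successors are the states $\delta(t,\nu)$ ranging over $\nu \in 2^{\inputs}$, so $t$ has at most $\size{2^{\inputs}} = 2^{\size{\inputs}}$ outgoing edges. Hence the graph has maximal outdegree $m \le 2^{\size{\inputs}}$.

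Then I would simply plug these values into Lemma~\ref{lem:graph_bound}: the number of cycles is at most $2^{n \log(m+1)} \le 2^{\size{\mealy}\cdot\log(2^{\size{\inputs}}+1)}$. To finish, I would use the crude estimate $\log(2^{\size{\inputs}}+1) \le \size{\inputs}+1 \in \bigo(\size{\inputs})$ (assuming $\size{\inputs}\ge 1$; the degenerate case $\inputs=\emptyset$ gives a deterministic machine with at most $n$ cycles and is handled trivially), so that $\size{\mathcal{C}(\mealy)} \le 2^{\size{\mealy}\cdot(\size{\inputs}+1)} \in \bigo(2^{\size{\mealy}\cdot\size{\inputs}})$.

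This proof is essentially a one-line reduction to the previous lemma, so there is no real obstacle; the only point requiring minor care is the logarithmic slack when converting $\log(m+1)$ with $m = 2^{\size{\inputs}}$ into the clean exponent $\size{\mealy}\cdot\size{\inputs}$ inside the $\bigo(\cdot)$, and noting that the definition of a cycle of $\mealy$ already discards the input labels so that parallel edges collapse and the graph $(T,E)$ is exactly the object to which Lemma~\ref{lem:graph_bound} applies.
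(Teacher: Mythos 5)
Your proof is correct and takes essentially the same route as the paper: both apply Lemma~\ref{lem:graph_bound} to the underlying graph $(T,E)$ with $n=\size{\mealy}$ vertices and outdegree at most $2^{\size{\inputs}}$, and then absorb the resulting $\log(2^{\size{\inputs}}+1)$ factor into the exponent of the $\bigo$-bound. Your crude estimate $\log(2^{\size{\inputs}}+1)\le\size{\inputs}+1$ is no looser in substance than the paper's own final step; both really establish the bound $2^{\bigo(\size{\mealy}\cdot\size{\inputs})}$, which is the intended reading.
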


\begin{proof}
  The Mealy machine~$ \mealy $ has an outdegree of
  $ 2^{\size{\inputs}} $ and, thus, by Lemma~\ref{lem:graph_bound}, the
  number of cycles is bounded by
  $ 2^{\size{\mealy} \log (2^{\size{\inputs}} + 1)} \in
  \bigo(2^{\size{\mealy} \cdot \size{\inputs}}) $. \qed
\end{proof}

\noindent Finally, we are able to derive an upper bound on the
implementations realizing a LTL specification~$ \varphi $.

\begin{theorem}
  Let $ \varphi $ be a realizable LTL specification. Then there is a
  Mealy machine~$ \mealy $, realizing  $ \varphi $, with at most triply
  exponential many cycles in $ \size{\varphi} $.
  \label{thm:upper_bound}
\end{theorem}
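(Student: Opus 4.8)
The strategy is to combine the standard bounded-synthesis upper bound on the size of an implementation with Lemma~\ref{lem:mealy_bound}. First I would recall the classical automata-theoretic pipeline for LTL synthesis: from a realizable specification $\varphi$ one builds a universal co-Büchi automaton $\aut_\varphi$ of size $2^{\bigo(\size{\varphi})}$ (singly exponential), via the standard translation of $\neg\varphi$ to a nondeterministic Büchi automaton and dualization. A realizable $\varphi$ then admits a finite-state implementation, and the Safra/Safraless determinization arguments (or the bounded-synthesis bound of Schewe and Finkbeiner) give a Mealy machine $\mealy$ realizing $\varphi$ whose number of states is at most exponential in $\size{\aut_\varphi}$, hence doubly exponential in $\size{\varphi}$: $\size{\mealy} \in 2^{2^{\bigo(\size{\varphi})}}$.

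Next I would simply feed this bound into Lemma~\ref{lem:mealy_bound}. That lemma gives $\size{\mathcal{C}(\mealy)} \in \bigo(2^{\size{\mealy}\cdot\size{\inputs}})$. Substituting $\size{\mealy} \in 2^{2^{\bigo(\size{\varphi})}}$ and noting that $\size{\inputs} \le \size{\varphi}$ (the input signals are among the atomic propositions, hence among the subformulas of $\varphi$), the exponent $\size{\mealy}\cdot\size{\inputs}$ is still doubly exponential in $\size{\varphi}$, so the number of cycles is $2^{2^{2^{\bigo(\size{\varphi})}}}$, i.e.\ triply exponential in $\size{\varphi}$. This yields the claimed Mealy machine.

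The only real subtlety — and the step I would be most careful about — is making sure the \emph{same} machine that one picks to witness realizability also has a doubly-exponential state bound; this is exactly what the bounded-synthesis framework cited in the paper guarantees, so I would invoke it rather than rederiving it. A minor point worth a sentence is that Lemma~\ref{lem:graph_bound} (and hence Lemma~\ref{lem:mealy_bound}) is stated for graphs/Mealy machines with a fixed outdegree, and a Mealy machine has outdegree exactly $2^{\size{\inputs}}$ regardless of how many states it has, so there is no circularity between the state bound and the outdegree bound. Everything else is routine bookkeeping with towers of exponentials, which I would not spell out beyond noting that composing ``$\times$ singly exponential'' with ``doubly exponential'' inside one more exponent still lands at triply exponential.
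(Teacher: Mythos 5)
Your proposal is correct and follows essentially the same route as the paper: invoke the doubly exponential state bound for implementations of realizable LTL specifications (from the bounded-synthesis literature) and then apply Lemma~\ref{lem:mealy_bound} to get a triply exponential cycle bound. The extra remarks about $\size{\inputs}\le\size{\varphi}$ and the fixed outdegree are sound but not needed beyond what the paper's own two-line proof already assumes.
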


\begin{proof}
  From~\cite{Kupferman:2005,Piterman:2006,Schewe:2013} we obtain a
  doubly exponential upper bound in $ \size{\varphi} $ on the size of
  $ \mealy $. With that, applying Lemma~\ref{lem:mealy_bound}
  yields the desired result. \qed
\end{proof}

\subsection{Lower bounds}

It remains to prove that the bound of
Theorem~\ref{thm:upper_bound} is tight. To this end, we show that for
each $ n \in \nats $ there is a realizable LTL
specification~$ \varphi $ with $ \size{\varphi} \in \Theta(n) $, such
that every implementation of $ \varphi $ has at least triply
exponential many cycles in $ n $. The presented proof is inspired
by~\cite{Alur:2004}, where a similar argument is used to prove a
lower bound on the distance of the longest path through a synthesized
implementation~$ \mealy $.  We start with a gadget, which we use to
increase the number of cycles exponentially in the length of the
longest cycle of $ \mealy $.

\begin{lemma}
 \label{lem:blowup_gadget}
  Let $ \varphi $ be a realizable LTL specification, for which every
  implementation~$ \mealy $ has a cycle of length $ n $.  Then there
  is a realizable specification~$ \psi $, such that every Mealy
  machine~$ \mealy' $ implementing $ \psi $ contains at least
  $ 2^{n} $ many cycles.
\end{lemma}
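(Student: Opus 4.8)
The plan is to re-use the forced cycle of length $n$ as a ``ruler'' and to graft an $n$-bit shift register onto it. Concretely, I would add one fresh input $x$ and one fresh output $y$, and let $\psi$ be the conjunction of a (relativised) copy of $\varphi$ with a formula demanding that the implementation repeat, with period exactly one lap of $\varphi$'s cycle, the sequence of values seen on $x$ as a sequence of values emitted on $y$. Intuitively, to satisfy such a demand an implementation must, at the end of a lap, have stored the whole word $w\in\{0,1\}^{n}$ read on $x$ during that lap, so as to replay it on $y$ during the next lap; and for each of the $2^{n}$ words $w$ there is a distinct simple cycle, namely the loop along which the machine \emph{reads $w$ while replaying $w$}. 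Two such loops cannot coincide: around the loop for $w$ the machine is pinned to output $w$ on $y$, so the loops for distinct words pass through different (distinguishable ``end-of-lap'') states. Hence every implementation of $\psi$ has at least $2^{n}$ cycles.

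Carrying this out needs three ingredients. First, $\varphi$ must be made transparent to the bookkeeping: I would relativise $\varphi$ to the sub-sequence of steps that are not replay/clock steps (the standard relativisation of $\LTLnext$ and $\LTLuntil$ to a sub-sequence), which costs only a constant factor in formula size — important, since the overall theorem instantiates this lemma with $n$ doubly exponential, so the gadget must not blow up in $n$. Second, the period of the replay must be pinned to $n$; this is where the hypothesis enters. Since $\varphi$ forces a simple cycle of length $n$ in \emph{every} implementation, I would have $\psi$ additionally require the run to be eventually periodic along a period that traces a simple cycle of the $\varphi$-part, and synchronise the shift register and the $x/y$ obligation to that period — so that ``$y$ now equals $x$ one lap ago'' is expressible without an $n$-fold nesting of $\LTLnext$. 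Third, realisability: a strategy for $\psi$ runs a winning $\varphi$-strategy on the content steps and a clocked $n$-bit shift register on the fresh signals, so $\psi$ is realisable.

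The main obstacle is the lower bound itself: showing that \emph{every} implementation $\mealy'$ of $\psi$ — not merely a canonical or minimal one — is forced to contain all $2^{n}$ of these loops. One must argue that the replay obligation cannot be discharged with fewer than $2^{n}$ pairwise distinguishable end-of-lap states, that each $w$ genuinely induces a length-$n$ simple loop through non-repeating states, and — the delicate part — that the period-synchronisation conjunct really forces laps of length exactly $n$ in $\mealy'$ rather than some shorter period; here the key move is that the $\varphi$-reduct of $\mealy'$, obtained by hard-wiring $x$, is an implementation of $\varphi$ and hence (after trimming to its reachable part) carries a simple cycle of length $n$. Once the $2^{n}$ loops are exhibited, counting them and invoking their pairwise distinctness finishes the proof.
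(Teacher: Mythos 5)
There is a genuine gap, and it sits exactly where you flagged the ``delicate part'': the period-synchronisation conjunct. The delay of ``one lap'' is a property of the \emph{implementation's} state graph, not of the trace; an LTL formula is a trace property and cannot refer to, or synchronise with, a simple cycle of whatever Mealy machine happens to realize $\varphi$ (different implementations realize their length-$n$ cycles at different phases, or interleave them with other behaviour, so ``end of lap'' is not observable on the word). The only trace-level way to say ``$y$ now equals $x$, $n$ steps ago'' is to hard-code the delay, e.g.\ $\LTLglobally(x \leftrightarrow \LTLnext^{n} y)$, which makes $\size{\psi}$ grow linearly in $n$ --- and in the application (Theorem~\ref{thm:lower_bound}) $n$ is doubly exponential in the formula-size parameter, so this destroys the bound the lemma is needed for. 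Your proposed escape (requiring the run to be ``eventually periodic along a period that traces a simple cycle of the $\varphi$-part'') is not an LTL-expressible condition. There are secondary gaps as well: even granting $2^{n}$ pairwise distinguishable ``end-of-lap'' states, that does not yet yield $2^{n}$ \emph{simple} cycles --- you would still have to show that each word $w$ induces a loop that is simple (no repeated states within a lap) and that closes after exactly one lap, neither of which is automatic. Finally, the relativisation of $\varphi$ to ``content steps'' is both unnecessary and problematic, since which steps are ``replay steps'' is again not determined by the trace.

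The intended construction is much lighter: add one fresh input $a$ and one fresh output $b$ and take $\psi = \varphi \wedge \LTLglobally(a \leftrightarrow \LTLnext b)$ --- a constant-size gadget, no relativisation needed because $a,b$ do not occur in $\varphi$. Realizability is clear (run any implementation of $\varphi$ in product with a one-bit delay register). For the lower bound, any implementation $\mealy''$ of $\psi$ projects (on the old signals) to an implementation of $\varphi$ and hence contains a cycle $t_{1},\dots,t_{n}$; since each state of $\mealy''$ must additionally remember the last value of $a$, and that bit is set \emph{freely and independently} by the environment at every step, every assignment $f\colon \set{t_{1},\dots,t_{n}} \to \set{0,1}$ of the bit to the $n$ positions yields a distinct simple cycle of $\mealy''$. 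That gives $2^{n}$ cycles directly --- the multiplication happens per step along the existing cycle, rather than via a full $n$-bit shift register.
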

\begin{proof}
  Let $ a $ and $ b $ be a fresh input and output signals,
  respectively, which do not appear in $ \varphi $, and let
  $ \mealy = (\inputs, \outputs, T, t_{I}, \delta, \lambda) $ be an
  arbitrary implementation of~$ \varphi $. We define
  $ \psi ::= \varphi \wedge \LTLglobally(a \leftrightarrow \LTLnext b)
  $ and construct the implementation~$ \mealy' $ as
  \begin{equation*}
    \mealy' = (\inputs \cup \{ a \}, \outputs \cup \{ b
    \}, T \times 2^{\set{ b }}, (t_{I},\emptyset), \delta', \lambda'),
  \end{equation*}
  where $ \lambda'((t,s),\nu) = \lambda(t,\inputs \cap \nu) \cup s $
  and
  \begin{equation*}
    \delta'((t,s), \nu) = 
    \begin{cases}
      (\delta(t,\inputs \cap \nu), \emptyset) & \text{if } a \in \nu \\
      (\delta(t,\inputs \cap \nu), \set{ b }) & \text{if } a \notin \nu
    \end{cases}
  \end{equation*}
  We obtain that $ \mealy' $ is an implementation of $ \psi $.  The
  implementation remembers each input~$ a $ for one time step and then
  outputs the stored value. Thus, it satisfies
  $ \LTLglobally \,(a \leftrightarrow \LTLnext b) $. Furthermore,
  $ \mealy' $ still satisfies $ \varphi $. Hence, $ \psi $ must be
  realizable, too.

  Next, we pick an arbitrary implementation $ \mealy'' $ of $ \psi $,
  which must exist according to our previous observations. Then, after
  projecting away the fresh signals $ a $ and $ b $ from $ \mealy'' $,
  we obtain again an implementation for $ \varphi $, which contains
  a cycle $ (C,\eta) $ of length~$ n $, i.e.,
  $ C = \set{ t_{1}, t_{2}, \ldots, t_{n} } $. We obtain that
  $ \mealy'' $ contains at least the cycles
  \begin{equation*}
    \mathbb{C} = \set{ ( \set{ (t_{i},f(t_{i})) \mid i \in 
        \set{ 1,2,\ldots n }}, (t,s) \mapsto (\eta(t),f(\eta(t)))) 
      \mid f \colon C \rightarrow 2^{\set{ b }} },
  \end{equation*}
  which concludes the proof, since $ \size{\mathbb{C}} = 2^{n} $. \qed
\end{proof}

\noindent Now, with Lemma~\ref{lem:blowup_gadget} at hand, we are
ready to show that the aforementioned lower bounds are tight. The final specification only needs the temporal operators
$ \LTLnext $, $ \LTLglobally $ and $ \LTLeventually $, i.e., the bound
already holds for a restricted fragment~of~LTL.

\begin{theorem}
  For every $ n > 1 $, there is a realizable
  specification~$ \varphi_{n} $ with
  $ \size{\varphi_{n}} \in \Theta(n) $, for which every
  implementation~$ \mealy_{n} $ has at least triply exponential many
  cycles in $ n $.
  \label{thm:lower_bound}
\end{theorem}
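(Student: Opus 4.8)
\noindent\emph{Proof plan.}
The plan is to reduce the statement to Lemma~\ref{lem:blowup_gadget}: it suffices to construct, for every $ n > 1 $, a realizable specification~$ \chi_{n} $ with $ \size{\chi_{n}} \in \Theta(n) $ such that \emph{every} implementation of~$ \chi_{n} $ contains a cycle whose length is at least doubly exponential in~$ n $. Since the proof of Lemma~\ref{lem:blowup_gadget} only uses the existence of \emph{some} cycle of at least the given length in every implementation (so the precise length is immaterial), applying it to~$ \chi_{n} $ yields a realizable specification $ \varphi_{n} := \chi_{n} \wedge \Globally\,(a \Iff \Next b) $ with $ \size{\varphi_{n}} = \size{\chi_{n}} + \bigo(1) \in \Theta(n) $, every implementation of which has at least $ 2^{2^{2^{\Omega(n)}}} $ cycles, i.e.\ triply exponentially many in~$ n $; only $ \Next $, $ \Globally $, and $ \Finally $ occur.

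\noindent\emph{Forcing a long cycle.}
Following the counter construction of~\cite{Alur:2004}, I would let $ \chi_{n} $ force the output to spell out the $ \omega $-word enumerating the values of a $ 2^{n} $-bit binary counter modulo $ 2^{2^{n}} $. Since a $ 2^{n} $-bit value cannot be named with $ \bigo(n) $ symbols, each value is stored in a \emph{block} of $ 2^{n} $ cells, cell~$ i $ carrying the $ n $-bit binary \emph{address}~$ i $, the $ i $-th \emph{value bit}~$ v_{i} $ of the current counter value, and a \emph{carry bit}~$ \gamma_{i} $; a dedicated marker separates consecutive blocks, the first position being such a marker. An $ \bigo(n) $-sized conjunction of $ \Globally $-formulas then states that (i)~inside a block the address runs through $ 0, 1, \dots, 2^{n}-1 $ and the marker occurs exactly at the overflow --- the familiar $ \bigo(n) $-sized encoding of an $ n $-bit counter with one signal per bit --- and (ii)~each block is the binary increment of its predecessor, expressed through the local recurrences $ \gamma_{0} = 1 $ and $ \gamma_{i+1} = \gamma_{i} \wedge v_{i} $ together with the rule that the value bit of the correspondingly-addressed cell in the next block equals $ v_{i} \oplus \gamma_{i} $. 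That cell lies $ 2^{n}+1 $ positions later; nevertheless, as shown in~\cite{Alur:2004}, the relation between a cell and its counterpart in the next block can be written with $ \bigo(n) $ symbols. Realizability of~$ \chi_{n} $ is immediate, since the explicit finite-state counter machine outputs this word and ignores the input.

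\noindent\emph{From the word to a cycle.}
Given any implementation~$ \mealy $ of~$ \chi_{n} $, consider its run on the constant input~$ \emptyset $; the output along this run is a periodic word~$ v^{\omega} $ in which $ v $ lists the $ 2^{2^{n}} $ counter values in some cyclic order, each value spread over $ 2^{n}+1 $ positions (its $ 2^{n} $ cells and a marker). Because consecutive blocks encode distinct values, $ v $ is primitive, so its primitive period is $ \size{v} = (2^{n}+1)\cdot 2^{2^{n}} \geq 2^{2^{n}} $. The run of~$ \mealy $ through its finite state set is eventually periodic, and the period~$ L $ of the simple cycle it settles into is a period of the forced output~$ v^{\omega} $, hence a multiple of~$ \size{v} $. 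Therefore $ \mealy $ contains a simple cycle of length $ L \geq 2^{2^{n}} $, as required by the reduction.

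\noindent\emph{Main obstacle.}
The one genuinely technical point is part~(ii): encoding, with only $ \bigo(n) $ formula symbols, the carry-propagating increment relation between blocks whose corresponding cells are $ \Theta(2^{n}) $ positions apart --- precisely the doubly-exponential gadget of~\cite{Alur:2004}, and the place where one must also take care to stay within the $ \{\Next,\Globally,\Finally\} $ fragment. Everything else (the $ n $-bit address counter, the primitivity of~$ v $, and extracting a simple cycle from an eventually-periodic run) is routine.
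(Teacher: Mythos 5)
Your opening move---reducing to Lemma~\ref{lem:blowup_gadget} by exhibiting an $O(n)$-size realizable $\chi_{n}$ for which \emph{every} implementation contains a cycle of doubly exponential length---is exactly the paper's strategy, and your final step extracting a simple cycle from an eventually periodic run would be fine if the premise held. The gap is the gadget itself. Your $\chi_{n}$ constrains only the outputs and is independent of the inputs (you even note that the witness implementation ``ignores the input''). No input-independent specification of size $O(n)$ can force a long cycle in every implementation: translating $\chi_{n}$ into a B\"uchi automaton gives $N = 2^{O(n)}$ states; realizability gives $\lang(\chi_{n}) \neq \emptyset$, so an accepting lasso yields a word $uv^{\omega} \in \lang(\chi_{n})$ with $\size{u} + \size{v} \leq N$. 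The Mealy machine with $\size{u}+\size{v}$ states that ignores its input and emits the output projection of $uv^{\omega}$ is then a valid implementation (satisfaction of a formula over output propositions depends only on the output projection), and its state graph is a lasso containing exactly one simple cycle, of length $\size{v} \leq 2^{O(n)}$. Consequently, either your $O(n)$-size formula cannot define the intended language (every word of which would have least period $\geq 2^{2^{n}}$, contradicting the lasso bound), or it necessarily ``leaks'' short-period models; either way the hypothesis of Lemma~\ref{lem:blowup_gadget} with a doubly exponential cycle length fails, and you obtain at most doubly exponentially many cycles. This is also where the appeal to~\cite{Alur:2004} is misapplied: $O(n)$-size LTL can relate positions $O(n)$ apart in a single word (the usual PSPACE-hardness counter), but relating cells $\Theta(2^{n})$ apart is achieved there through the \emph{interaction} with the environment, not inside one linear model.

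The paper's construction supplies precisely the missing interactive ingredient. Its $\varphi_{n}$ is a request/response monitor whose premise and conclusion range over \emph{input} vectors $\vec{a},\vec{b},\vec{c},\vec{d}$, with a single output $s$ reporting whether the invariant holds. Correctness forces every implementation to maintain a map from the $2^{n}$ possible request vectors to sets of such vectors; distinct associated bit-vectors $b_{f}$ of length $2^{n}$ force distinct states, and the environment has an explicit strategy (set/reset individual bits of $b_{f}$) that drives the implementation through all $2^{2^{n}}$ values of a counter and hence around a cycle of doubly exponential length. The long cycle is imposed by the environment's input choices, not by the period of any output word---if you want to salvage your counter idea, the counter has to be driven from the input side in this fashion.
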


\begin{proof}
  According to Lemma~\ref{lem:blowup_gadget}, it suffices to find
  a realizable $ \varphi_{n} $, such that $ \varphi_{n} $ contains at
  least one cycle of length doubly exponential in $ n $. We choose
  \begin{equation*}
    \begin{tikzpicture}
      \draw [decorate,decoration={brace,mirror,amplitude=5pt}] (-2.4,-0.4) --
      (-0.25,-0.4) node [black,midway,yshift=-14] {$ \varphi^{\textit{prem}}_{n} $};
      \draw [decorate,decoration={brace,mirror,amplitude=5pt}] (0.7,-0.4) --
      (2.9,-0.4) node [black,midway,yshift=-14] {$ \varphi^{\textit{con}}_{n} $};

      \node at (0,0) { $ \varphi_{n} \ \ ::= \ \ \LTLglobally \; (\LTLfinally
        \bigwedge\limits_{i=1}^{n} (a_{i} \rightarrow \LTLfinally b_{i})
        \rightarrow \LTLfinally \bigwedge\limits_{i=1}^{n} (c_{i} \rightarrow
        \LTLfinally d_{i})) \ \leftrightarrow \ \LTLglobally \LTLfinally s $};
    \end{tikzpicture}
  \end{equation*}
  \noindent with
  $ \inputs = \inputs_{a} \cup \inputs_{b} \cup \inputs_{c} \cup
  \inputs_{d} $
  and $ \outputs = \set{ s } $, where
  $ \inputs_{x} = \set{ x_{1}, x_{2}, \ldots, x_{n} } $. The
  specification describes a monitor, which checks whether the
  invariant
  $ \LTLfinally \varphi_{n}^{\textit{prem}} \rightarrow \LTLfinally
  \varphi_{n}^{\textit{con}} $
  over the input signals~$ \inputs $ is satisfied or not. Thereby,
  satisfaction is signaled by the output~$ s $, which needs to be
  triggered infinitely often, as long as the invariant stays
  satisfied.

  In the following, we denote a subset $ x \subseteq \inputs_{x} $ by
  the $ n $-ary vector~$ \vec{x} $ over $ \set{ 0, 1 } $, where the
  $ i $-th entry of $ \vec{x} $ is set to $ 1 $ if and only if
  $ x_{i} \in x $.

  The specification~$ \varphi_{n} $ is realizable. First, consider
  that to check the fulfillment of
  \mbox{$ \varphi_{n}^{\textit{prem}} $
    ($ \varphi_{n}^{\textit{con}} $)}, an implementation~$ \mealy $
  needs to store the set of all requests~\mbox{$ \vec{a} $
    ($ \vec{c} $)}, whose $ 1 $-positions have not yet been released
  by a corresponding response~\mbox{$ \vec{b} $
    (\hspace{1pt}$ \vec{\!d} $)}. Furthermore, to monitor the complete
  invariant
  $ \LTLfinally \varphi_{n}^{\textit{prem}} \rightarrow \LTLfinally
  \varphi_{n}^{\textit{con}} $,
  $ \mealy $ has to guess at each point in time, whether
  $ \varphi_{n}^{\textit{prem}} $ will be satisfied in the future
  (under the current request~$ \vec{a} $), or not. To realize this
  guess, $ \mealy $ needs to store a mapping~$ f $, which maps each
  open request~$ \vec{a} $ to the corresponding set of
  requests~$ \vec{c} $~\footnote{Our representation is open for many
    optimizations. However, they will not affect the overall
    complexity result. Thus, we ignore them for the sake of
    readability here.}.  This way, $ \mealy $ can look up the set of
  requests~$ \vec{c} $, tracked since the last occurrence of
  $ \vec{a} $, whenever $ \vec{a} $ gets released by a corresponding
  vector~$ \vec{b} $. If this is the case, it continues to monitor the
  satisfaction of $ \varphi_{n}^{\textit{con}} $ (if not already
  satisfied) and finally adjusts the output signal~$ s $,
  correspondingly. Note that $ \mealy $ still has to continuously
  update and store the mapping~$ f $, since the next satisfaction of
  $ \varphi_{n}^{\textit{prem}} $ may already start while the
  satisfaction of current $ \varphi_{n}^{\textit{con}} $ is still
  checked. There are double exponential many such mappings~$ f $,
  hence, $ \mealy $ needs to be at least doubly exponential in $ n $.

  It remains to show that every such implementation~$ \mealy $
  contains a cycle of at least doubly exponential length. By the
  aforementioned observations, we can assign each state of $ \mealy $
  a mapping~$ f $, that maps vectors~$ \vec{a} $ to sets of
  vectors~$ \vec{c} $. By interpreting the vectors as numbers, encoded
  in binary, we obtain that
  $ f \colon \set{1,2,\ldots,2^{n}} \mapsto 2^{\set{1,2,\ldots,2^{n}}}
  $.
  Next, we again map each such mapping~$ f $ to a binary sequence
  $ b_{f} = b_{0}b_{1}\ldots b_{m} \in \{ 0, 1 \}^{m} $ with
  $ m = 2^{n} $. Thereby, a bit $ b_{i} $ of $ b_{f} $ is set to $ 1 $
  if and only if $ i \in f(i) $. It is easy to observe, that if two
  binary sequences are different, then their related states have to be
  different as well.

  To conclude the proof, we show that the environment has a strategy
  to manipulate the bits of associated sequences~$ b_{f} $ via the
  inputs~$ \inputs $. 

  To set bit~$ b_{i} $, the environment chooses the
  requests~$ \vec{a} $ and $ \vec{c} $ such that they represent $ i $
  in binary. The remaining inputs are fixed to
  $ \vec{b} = \vec{d} = \vec{0} $. Hence, all other bits are not
  affected, as possible requests of previous $ \vec{a} $
  and $ \vec{c} $ remain open.

  To reset bit $ b_{i} $, the environment needs multiple steps. First,
  it picks $ \vec{a} = \vec{c} = \vec{d} = \vec{0} $ and
  $ \vec{b} = \vec{1} $.  This does not affect any bit of the
  sequence~$ b_{f} $, since all requests introduced through
  vectors~$ \vec{c} $ are still open.  Next, the environment executes
  the aforementioned procedure to set bit $ b_{j} $ for every bit
  currently set to $ 1 $, except for the bit $ b_{i} $, it wants to
  reset. This refreshes the requests introduced by previous
  vectors~$ \vec{a} $ for every bit, except for $ b_{i} $.
  Furthermore, it does not affect the sequence $ b_{f} $.  Finally,
  the environment picks $ \vec{a} = \vec{b} = \vec{c} = \vec{0} $ and
  picks $ \vec{d} $ such that it represents $ i $ in binary.  This
  removes $ i $ from every entry in~$ f $, but only resets $ b_{i} $,
  since all other bits are still open due to the previous updates.
  
  With these two operations, the environment can enforce any sequences
  of sequences~$ b_{f} $, including a binary counter counting up to
  $ 2^{2^{n}} $. As different states are induced by the different
  sequences, we obtain a cycle of doubly exponential length in $ n $
  by resetting the counter at every overflow. \qed
\end{proof}

\subsection{The trade-off between states and cycles}

We conclude this section with some observations regarding tradeoffs
between the problem of synthesizing implementations, which are minimal
in the number of states, versus the problem of synthesizing
implementations, which are minimal in the number of cycles. The main
question we answer, is whether we can achieve both: minimality in the
number of states and minimality in the number of
cycles. Unfortunately, this is not possible, as shown by
Theorem~\ref{thm:tradeoffs}.

\begin{theorem}
  \label{thm:tradeoffs}
  For every $ n > 1 $, there is a realizable LTL
  specification~$ \varphi_{n} $ with $ \size{\varphi} \in \Theta(n) $, such that
  \begin{itemize}

  \item there is an implementation of $ \varphi $ consisting of $ n $ states and

  \item there is an implementation of $ \varphi $ containing $ m $ cycles,

  \item but there is no implementation of $ \varphi $ with $ n $
    states and $ m $ cycles.

  \end{itemize}
\end{theorem}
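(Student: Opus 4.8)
The plan is to build, for each $ n > 1 $, a linear‑size specification $ \varphi_n $ that offers the implementation a binary \emph{choice of regime}, and then to take $ m = 1 $. Using a fresh output bit $ q $, set $ \varphi_n ::= (q_0 \wedge \varphi^{\mathrm{R}}_n) \vee (\neg q_0 \wedge \varphi^{\mathrm{L}}_n) $, where $ q_0 $ denotes the value of $ q $ at time $ 0 $, $ \varphi^{\mathrm{R}}_n $ is a \emph{reactive} task whose canonical $ n $-state solution is a densely connected state graph — concretely, ``echo the environment's request one step later'': the environment picks a value $ r $ among $ n $ alternatives and the implementation must re-signal that value at the next step — and $ \varphi^{\mathrm{L}}_n $ is a \emph{rigid} task forcing a distinguished output~$ s $ to be high exactly at the time steps congruent to $ 1 $ modulo a fixed $ N $ with $ n < N \in \Theta(n) $. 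The disjunction lets an implementation commit to $ \varphi^{\mathrm{R}}_n $ (by emitting $ q $ at time~$ 0 $) or to $ \varphi^{\mathrm{L}}_n $ (by emitting $ \neg q $); since $ \varphi^{\mathrm{L}}_n $ already costs $ \Theta(N)=\Theta(n) $ symbols and $ \varphi^{\mathrm{R}}_n $ can be written with $ O(\log n) $ symbols, we get $ \size{\varphi_n}\in\Theta(n) $.

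For the two positive parts I would just exhibit the witnesses. The echo task is realized by the Mealy machine with $ n $ states — one per value most recently requested — whose transition graph is the complete digraph $ K_n $ (with self-loops, or without, if the task also forbids an immediate repetition); emitting $ q $ at the initial state turns this into an $ n $-state implementation of $ \varphi_n $, and it plainly has more than one cycle (in fact exponentially many, cf.\ Lemma~\ref{lem:graph_bound}). The modulo-$ N $ task is realized by the length-$ N $ loop that ignores its inputs; emitting $ \neg q $ at time~$ 0 $ turns this into an implementation of $ \varphi_n $ with exactly one cycle and $ N>n $ states. Both are clearly realizable, which also establishes that $ \varphi_n $ is realizable.

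The crux — and the step I expect to be the main obstacle — is showing that no implementation $ \mealy $ achieves both at most $ n $ states and at most one cycle. The key structural observation: a Mealy machine with at most one simple cycle is, after a transient of at most $ \size{\mealy} $ steps, confined to that single cycle, and from every state on the cycle \emph{all} inputs lead to its successor on the cycle — any deviating edge, together with a simple path back, would create a second simple cycle. Hence the sequence of states visited by $ \mealy $ is eventually periodic with period $ k\le\size{\mealy}\le n $ and independent of the inputs. Now case-split on the value of $ q $ at time~$ 0 $. If $ \mealy $ emits $ q $ on some first input, then $ \varphi^{\mathrm{R}}_n $ must hold on that branch for every continuation of the inputs; but in the eventual part the state reached after a given step is the same regardless of what was requested at that step, so the output at the next step cannot encode the just-requested value, contradicting the echo requirement as soon as $ n\ge 2 $. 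If $ \mealy $ emits $ \neg q $ on every first input, then $ \varphi^{\mathrm{L}}_n $ must hold on every branch; a state recurring with period $ k $ would then have to carry $ s $ high on a set of times that is simultaneously an arithmetic progression with common difference $ k $ and the residue class $ 1\bmod N $, which (since $ k\le n<N $) forces $ N\mid k $ — impossible. Either way we reach a contradiction. What remains is routine: verifying realizability of $ \varphi^{\mathrm{R}}_n $ and $ \varphi^{\mathrm{L}}_n $ in detail, fixing encodings so that $ \size{\varphi_n} $ is exactly linear, and reading off the concrete values $ m=1 $ and $ N=n+1 $.
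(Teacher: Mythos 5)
Your proposal is correct, but it reaches the theorem by a genuinely different route than the paper. The paper uses a single, non-disjunctive specification that pins down the behaviour of the first $2k+2$ steps: the conjuncts $\LTLnext^{i}(\ldots(a \leftrightarrow \LTLnext b))$ force a $k$-stage ``remember the last input bit'' diamond chain, so the unique minimal implementation has $n=2k+1$ states and must close the chain back onto its initial state, producing $2^{k}$ simple cycles, whereas one additional sink state with a self-loop collapses the count to $m=1$; the trade-off there is between \emph{reusing} states (which creates cycles) and \emph{spending} one more state. You instead build a disjunctive specification that lets the implementation commit, via the initial output bit, to one of two tasks with complementary resource profiles (a dense echo task realizable with $n$ states but many cycles, and a rigid modulo-$(n{+}1)$ counter realizable with one cycle but $n+1$ states), and you obtain the impossibility of ``$\leq n$ states and one cycle'' from a structural lemma: a Mealy machine with a single simple cycle has, from every cycle state, an input-independent successor, hence is eventually input-oblivious and periodic with period at most $n$, which defeats the echo disjunct (the next state cannot encode the current input) and the counter disjunct (the period cannot be a multiple of $n+1$) alike. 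Your argument is in fact more complete than the paper's on the third bullet, which the paper only sketches by appealing to the depicted machines; on the other hand the paper's example shows the gap with a single ``organic'' formula and exhibits an exponential cycle gap ($2^{k}$ versus $1$) purchased by a single extra state, which your $m=1$ construction does not aim for. One phrasing to tighten: the eventual state sequence of a single-cycle machine is not literally ``independent of the inputs'' (the entry point and entry time into the cycle may depend on the input prefix); what is true, and what your contradiction actually uses, is that once on the cycle the successor state is the same for every input letter, so fixing any sufficiently long input prefix and then varying one letter suffices.
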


\begin{proof}
  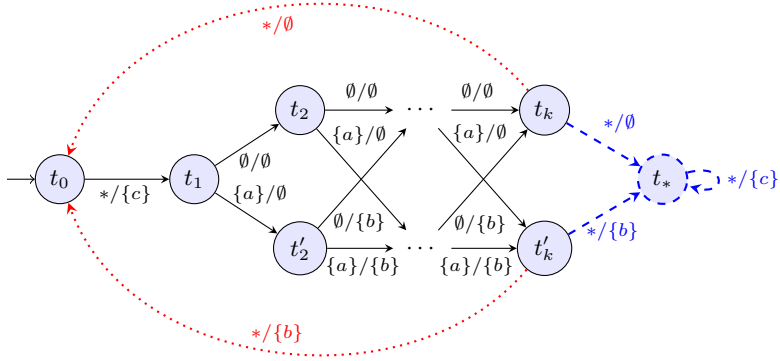
\begin{figure}[t]
    \centering

    \begin{tikzpicture}[node distance=4em]
      \clip (-0.7,-2.37) rectangle (9.6,2.37);

      \node[circle, draw, initial text=, initial left,fill=blue!10] (A) {$ t_{0} $};
      \node[circle, draw,fill=blue!10] (B1) [right of=A, xshift=1.5em] {$ t_{1} $ }; 
      \node[circle, draw,fill=blue!10] (B2) [above right of=B1, xshift=1.5em] {$ t_{2} $ }; 
      \node[circle, draw,fill=blue!10] (C2) [below right of=B1, xshift=1.5em] {$ t_{2}' $}; 
      \node (B3) [right of=B2, xshift=1em, inner sep=0.5em] {$ \cdots $ }; 
      \node (C3) [right of=C2, xshift=1em, inner sep=0.5em] {$ \cdots $}; 
      \node[circle, draw,fill=blue!10] (B4) [right of=B3, xshift=1em] {$ t_{k} $ }; 
      \node[circle, draw,fill=blue!10] (C4) [right of=C3, xshift=1em] {$ t_{k}' $}; 
      \node[circle, draw,dashed, thick, blue,fill=blue!10] (D) [above right of=C4, xshift=2em] {\textcolor{black}{$ t_{*} $}}; 

      \path[->, >=stealth]
      (A) edge node[below right,xshift=-14,yshift=1] {\scriptsize $ * / \set{ c } $ } (B1) 
      (B1) edge node[below,xshift=3] {\scriptsize $ \emptyset / \emptyset $ } (B2) edge node[below,yshift=14,xshift=5] {\scriptsize $ \set{ a } / \emptyset $ } (C2)
      (B2) edge node[above,xshift=1] {\scriptsize $ \emptyset / \emptyset $ } (B3) edge node[above,yshift=10,xshift=-1] {\scriptsize $ \set{ a } / \emptyset $ }(C3)
      (C2) edge node[below,yshift=-10,xshift=-1] {\scriptsize $ \emptyset / \set{ b } $ }(B3) edge node[below, xshift=1] {\scriptsize $ \set{ a } / \set{ b } $} (C3)
      (B3) edge node[above, xshift=-3] {\scriptsize $ \emptyset / \emptyset $ } (B4) edge node[above,yshift=10,xshift=-1] {\scriptsize $ \set{ a } / \emptyset $ }(C4)
      (C3) edge node[below,yshift=-10,xshift=-1] {\scriptsize $ \emptyset / \set{ b } $ } (B4) edge node[below, xshift=-3] {\scriptsize $ \set{ a } / \set{ b } $} (C4)
      ;

      \path[->, >=stealth,red,dotted,thick]
      (C4) edge[in=290,out=230] node[above] {\scriptsize $ * / \set{ b } $ } (A)
      (B4) edge[in=70, out=130,red] node[below] {\scriptsize $ * / \emptyset $ } (A);

      \path[->, >=stealth, dashed, blue, thick]
      (C4) edge node[below, xshift=3] {\scriptsize $ * / \set{ b } $} (D)
      (B4) edge node[above, xshift=6,yshift=1] {\scriptsize $ * / \emptyset $}(D);

      \path[->, >=stealth, thick, blue, dashed]
      (D) edge[loop right] node[right] {\scriptsize $ * / \set{ c } $} (D);

    \end{tikzpicture}

    \vspace{-0.3em}

    \caption{The Mealy automata~$ \mealy_{n} $ (red/dotted) and
      $ \mealy_{n}' $ (blue/dashed). Solid edges are shared between
      both automata.}
    \label{fig:mealies}
  \end{figure}

  Consider the specification
  \vspace{-0.5em}
  \begin{equation*}
    \varphi_{n} = (\neg b \wedge c) \wedge \LTLnext^{k+2} (\neg b \wedge
    c) \wedge \bigwedge_{i = 1}^{k} \LTLnext^{i} (\neg c \wedge \LTLnext \neg c \wedge (a \leftrightarrow
    \LTLnext b))
  \end{equation*} 

  \vspace{-0.3em}

  \noindent over $ \inputs = \set{ a } $ and $ \outputs = \set{ b, c } $, where
  $ \LTLnext^{i} $ denotes $ i $ times the application
  of~$ \LTLnext $. The specification~$ \varphi_{n} $ is realizable
  with at least $ n = 2k + 1 $ states. The corresponding Mealy
  machine~$ \mealy_{n} $ is depicted in
  Figure~\ref{fig:mealies}. However, $ \mealy_{n} $ has $ m = 2^{k} $
  many cycles. This blowup can be avoided by spending the
  implementation at least one more state, which reduces the number of
  cycles to $ m = 1 $. The result~$ \mealy_{n}' $ is also depicted in
  Figure~\ref{fig:mealies}. \qed
\end{proof}

\noindent Our results show that the number of cycles can explode (even more so than the number of states), and that
sometimes this explosion is unavoidable. However, the results also show that
there are cases, where the cycle count can be improved by choosing a
better structured solution.
Hence, it is desirable to have better control over the number of
cycles that appear in an implementation. In the remainder of the
paper, we show how to achieve this control.

\section{Bounding the Cycles}
\label{sec_encoding}
In this section, we show how to synthesize an implementation~$ \mealy $
from a given LTL specification~$ \varphi $, while giving a guarantee
on the size and the number of cycles of $ \mealy $. We first show how
to guarantee a bound on the number of states of $ \mealy $, by
reviewing the classical Bounded Synthesis approach. Our encoding uses
Mealy machines as implementations, and Boolean Satisfiability (SAT) as
the underlying constraint system. 

We then review the classical algorithm to count the cycles of
$ \mealy $ and show how this algorithm gets embedded into a constraint
system, such that we obtain a guarantee on the number of cycles of
$ \mealy $.

\subsection{Bounded Synthesis}

In the bounded synthesis approach~\cite{Schewe:2013}, we first
translate a given LTL specification~$ \varphi $ into an equivalent
universal co-Büchi automaton~$ \aut $, such that
$ \lang(\aut) = \lang(\varphi) $. Thus, we reduce the problem to
finding an implementation~$ \mealy $ that is accepted by $ \aut $,
i.e., we look for an implementation~$ \mealy $ such that the run graph
of $ \mealy $ and $ \aut $ contains no cycle with a rejecting
vertex. This property is witnessed by a ranking function, which annotates
each vertex of $ G $ by a natural number that bounds the number of
possible visits to rejecting states. The annotation itself is bounded
by $ n \cdot k $, where $ n $ is the size bound on $ \mealy $ and
$ k $ denotes the number or rejecting states of~$ \aut $.

\bigskip

\noindent Fix some set of states $ T $ with $ \size{T} = n $ and let
$ \aut = (2^{\inputs \cup \outputs},Q,q_{I},\Delta,R) $. Then, to
guess a solution within SAT, we introduce the following variables:

\begin{itemize}

\item $ \trans{t}{\nu}{t'} $ for all $ t,t' \in T $ and
  $ \nu \in 2^{\inputs} $, for the transition relation of $ \mealy $.

\item $ \tlabel{t}{\nu}{x} $ for all $ t \in T $,
  $ \nu \in 2^{\inputs} $ and $ x \in \outputs $, for the labels of
  each transition.

\item $ \rgstate{t}{q} $ for all $ t \in T $ and $ q \in Q $, to
  denote the reachable states of the run graph~$ G $ of $ \mealy $ and
  $ \aut $. Only reachable states have to be annotated.

\item $ \annotationd{t}{q}{i} $ for all $ t \in T $, $ q \in Q $ and
  $ 0 < i \leq \log(n \cdot k) $, denoting the annotation of a state
  $ (t,q) $ of $ G $. Thereby, we use a logarithmic number of bits to
  encode the annotated value in binary. We use
  $ \annotation{t}{q} \circ m $ for
  $ \circ \in \set{ <, \leq, =, \geq, > } $, to denote an appropriate
  encoding of the relation of the annotation to some value $ m $ or
  other annotations~$ \annotation{t'}{q'} $.
\end{itemize}

\noindent Given a universal co-Büchi automaton~$ \aut $ and a bound
$ n $ on the states of the resulting implementation, we encode the
Bounded Synthesis problem via the SAT
formula~$ \curlyF_{BS}(\aut,n) $, consisting of the following
constraints:

\begin{itemize}

\item The target of every transition is unambiguous:
  \begin{equation*}
    \bigwedge\limits_{t \in T,\, \nu \in 2^{\inputs}} 
    \textit{exactlyOne}(\set{ 
      \trans{t}{v}{t'} \mid t' \in T})
  \end{equation*}
  where $ \textit{exactelyOne} \colon X \mapsto \mathbb{B}(X) $
  returns a SAT query, which ensures that among all variables of the
  set $ X $ exactly one is \textit{true} and all others are
  \textit{false}.

  \medskip

\item The initial state~$ (t_{I},q_{I}) $ of the run graph for some
  arbitrary, but fix, $ t_{I} \in T $ is reachable and annotated by
  one. Furthermore, all annotations are bounded by
  $ n \cdot k $:
  \begin{equation*}
    \textsc{rgstate}(1,1) \wedge \annotation{1}{1} = 1 \wedge 
    \bigwedge\limits_{t \in T, \, q \in Q}  \annotation{t}{q}
    \leq n \cdot k
  \end{equation*}

\goodbreak

\item Each annotation of a vertex of the run graph bounds the number
  of visited accepting states, not counting the current vertex itself:
  \begin{equation*}
    \bigwedge\limits_{t \in T,\, q \in Q} \rgstate{t}{q} \rightarrow 
    \bigwedge\limits_{\sigma \in 2^{\Sigma}} \textit{label}(t,\sigma)
    \rightarrow \bigwedge\limits_{t' \in T} \trans{t}{\inputs \cap 
      \sigma}{t'} \rightarrow 
  \end{equation*}
  \begin{equation*}
    \qquad \bigwedge\limits_{q' \in \Delta(q,\sigma)} \rgstate{t'}{q'} 
    \wedge \annotation{t}{q} \prec_{q}  \annotation{t'}{q'}
  \end{equation*}
  where $ \prec_{q} $ equals $ < $ if $ q \in R $ and equals $ \leq $
  otherwise. Furthermore, we use the
  function~$ \textit{label}(t,\sigma) $ to fix the labeling of each
  transition, i.e.,
  $ \textit{label}(t,\sigma) = \bigwedge_{x \in \outputs \cap
    \sigma}\, \tlabel{t}{\inputs \cap \sigma}{x} \wedge \bigwedge_{x
    \in \outputs \smallsetminus \sigma} \neg \tlabel{t}{\inputs \cap
    \sigma}{x} $.

  \medskip

\end{itemize}

\begin{theorem}[Bounded Synthesis~\cite{Schewe:2013}]
  For each bound $ n \in \nats $ and each universal co-Büchi
  automaton~$ \aut $, the SAT formula~$ \curlyF_{BS}(\aut,n) $ is
  satisfiable if and only if there is a Mealy machine~$ \mealy $ with
  $ \size{\mealy} = n $, which is accepted by $ \aut $.
\end{theorem}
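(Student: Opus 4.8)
The plan is to prove both directions of the biconditional by unfolding the definition of the encoding $\curlyF_{BS}(\aut,n)$ and relating a satisfying assignment to a run graph that is accepting in the sense defined in the Preliminaries.

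\medskip

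\noindent\textbf{From a satisfying assignment to an implementation.} Suppose $\curlyF_{BS}(\aut,n)$ is satisfiable; fix a satisfying assignment. First I would read off the Mealy machine $\mealy$: the state set is the fixed $T$ with $\size{T}=n$, the initial state is the distinguished $t_I$ (index $1$), the transition function $\delta$ is determined by the variables $\trans{t}{\nu}{t'}$ — this is well-defined because the first constraint forces \emph{exactly one} successor for each $(t,\nu)$ — and the output function $\lambda$ is determined by the variables $\tlabel{t}{\nu}{x}$. It remains to show $\mealy$ is accepted by $\aut$, i.e., the run graph $G$ of $\mealy$ and $\aut$ has no cycle through a rejecting vertex. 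The key step is to use the annotation variables as a ranking witness. One first checks, by induction along reachable transitions starting from $(t_I,q_I)$, that every vertex $(t,q)$ reachable in $G$ from $(t_I,q_I)$ has $\rgstate{t}{q}$ true (using the reachability constraint on the initial vertex together with the propagation clause $\rgstate{t}{q}\rightarrow\bigwedge_{q'\in\Delta(q,\sigma)}\rgstate{t'}{q'}$ under the appropriate label and transition). Then, for any edge $((t,q),(t',q'))$ of $G$ between reachable vertices, the third constraint gives $\annotation{t}{q}\prec_q\annotation{t'}{q'}$, which is $<$ when $q\in R$ and $\leq$ otherwise, and the second constraint bounds all annotations by $n\cdot k$. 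Now suppose for contradiction that $G$ has a cycle $v_0v_1\cdots v_{\ell-1}v_0$ containing a rejecting vertex, say $v_0$ with $q\in R$ where $v_0=(t,q)$; such a cycle lies entirely among reachable vertices. Walking the annotations around the cycle, each edge is non-decreasing and the edge out of $v_0$ is strictly increasing, so the annotation of $v_0$ would strictly increase after one full loop — contradicting that it returns to itself. Hence no such cycle exists and $\mealy$ is accepted by $\aut$, with $\size{\mealy}=n$.

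\medskip

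\noindent\textbf{From an accepting implementation to a satisfying assignment.} Conversely, suppose there is a Mealy machine $\mealy$ with $\size{\mealy}=n$ accepted by $\aut$. After renaming states we may assume $\mealy$ uses exactly the fixed state set $T$ with initial state $t_I=1$. Set the variables $\trans{t}{\nu}{t'}$ and $\tlabel{t}{\nu}{x}$ according to $\delta$ and $\lambda$; the \emph{exactly one} constraints hold because $\delta$ is a function. Let $G$ be the run graph of $\mealy$ and $\aut$. Set $\rgstate{t}{q}$ true exactly for those $(t,q)$ reachable from $(t_I,q_I)$ in $G$. Since $G$ is accepting, no cycle of $G$ contains a rejecting vertex; the standard argument (see~\cite{Schewe:2013}) then produces a ranking function on the reachable part of $G$ bounded by $n\cdot k$: for instance, annotate $(t,q)$ with the length of a longest path in $G$ from $(t_I,q_I)$ to $(t,q)$ that counts only visits to rejecting states, which is finite and $\leq n\cdot k$ precisely because every rejecting vertex lies on no cycle — I would spell out this well-foundedness argument, which is the technical heart of this direction. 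Defining $\annotationd{t}{q}{i}$ to be the binary encoding of this value makes the bound constraint and the propagation/strict-increase constraint hold by construction, and $\annotation{1}{1}$ can be taken to be $1$ (or the construction adjusted by an additive shift). All constraints of $\curlyF_{BS}(\aut,n)$ are then satisfied.

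\medskip

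\noindent\textbf{Main obstacle.} The routine parts are the syntactic bookkeeping: that \emph{exactly one} implements a function, that $\rgstate{\cdot}{\cdot}$ captures reachability, and that the binary encodings of $\prec_q$ and $\leq n\cdot k$ behave as intended. The real content — and the step I expect to require the most care — is the equivalence between ``no cycle through a rejecting vertex in $G$'' and ``existence of a bounded ranking annotation'': the forward use (annotation $\Rightarrow$ no bad cycle) is the short pigeonhole-style argument sketched above, but the backward use (no bad cycle $\Rightarrow$ a ranking bounded by $n\cdot k$) needs the observation that between two consecutive visits to the same automaton-reachable vertex the run cannot have passed through a rejecting state, so that rejecting states are visited at most $n\cdot k$ times along any path, giving a well-founded bounded rank. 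This is exactly the argument from~\cite{Schewe:2013}, which we may cite, so in the write-up I would state it precisely and refer to that source for the full proof of the ranking lemma.
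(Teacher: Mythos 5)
The paper does not prove this theorem at all: it is imported verbatim from the Bounded Synthesis literature~\cite{Schewe:2013}, so there is no in-paper proof to compare against. Your reconstruction is the standard correctness argument for that encoding and is sound in both directions; the only point worth tightening is your unjustified assertion that a rejecting cycle of the run graph ``lies entirely among reachable vertices'' --- under the paper's literal definition the run graph is all of $T \times Q$, so you should either restrict attention to the reachable sub-graph (which is what acceptance of $\mealy$ by $\aut$ actually requires) or note that unreachable cycles are irrelevant to $\pi_1(\mathcal{P}(\mealy)) \subseteq \lang(\aut)$. The ranking-function equivalence you isolate as the technical heart (no rejecting cycle iff a bounded annotation exists, with the bound $n \cdot k$ coming from the fact that no rejecting vertex can repeat on a path) is precisely the lemma proved in the cited source, so deferring to it is appropriate.
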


\subsection{Counting Cycles}

Before we bound the number of cycles of a Mealy machine~$ \mealy $, we
review Tiernan's classical algorithm \cite{Tiernan:1970} to count the
number of cycles of a directed graph~$ G $. The algorithm not only
gives insights into the complexity of the problem, but also contains
many inspirations for our latter approach.

\goodbreak

\smallskip

\paragraph{Algorithm 1.} Given a directed graph~$ G = (V,E) $, we
count the cycles of $ G $ using the following algorithm:

\begin{enumerate}

\item[(1)] Initialize the cycle counter~$ c $ to $ c := 0 $ and some
  set~$ P $ to $ P := \emptyset $.

\item[(2)] Pick some arbitrary vertex~$ v_{r} $ of $ G $, set
  $ v := v_{r} $ and $ P := \set{ v_{r} } $.

\item[(3)] For all edges $ (v,v') \in E $, with
  $ v' \notin P \setminus \set{ v_{r} } $:

\begin{enumerate}

\item[(3a)] If $ v' = v_{r} $, increase $ c $ by one.

\item[(3b)] Oherwise, add $ v' $ to $ P $ and recursively execute
  (3). Afterwards, reset $ P $ to its value before the recursive call.

\end{enumerate}

\item[(4)] Obtain the sub-graph~$ G' $, by removing $ v_{r} $ from $ G $:

\begin{enumerate}

\item[(4a)] If $ G' $ is empty, return $ c $.

\item[(4b)] Otherwise, continue from (2) with $ G' $.

\end{enumerate}

\end{enumerate}

\noindent The algorithm starts by counting all cycles that contain
the first picked vertex~$ v_{r} $. This is done by an unfolding of the
graph into a tree, rooted in $ v_{r} $, such that there is no repetition
of a vertex on any path from the root to a leaf. The number of
vertices that are connected to the root by an edge of $ E $ then
represents the corresponding number of cycles through $ v_{r} $. The
remaining cycles of $ G $ do not contain $ v_{r} $ and, thus, are
cycles of the sub-graph $ G' $ without $ v_{r} $, as well. Hence, we
count the remaining cycles by recursively counting the cycles of
$ G' $. The algorithm terminates as soon as $ G' $ gets empty.

The algorithm is correct~\cite{Tiernan:1970}, but has the drawback,
that the unfolded trees, may become exponential in the size of the graph,
even if none of their vertices is connected to the root, i.e., even if there
is no cycle to be counted. For an example consider the induced graph
of~$ \mealy_{n}' $, as depicted in Figure~\ref{fig:mealies}. However,
this drawback can be avoided by first reducing the graph to all its
strongly connected components (SCCs) and then by counting the cycles
of each SCC separately~\cite{Weinblatt:1972,Johnson:1975}. 
A cycle never leaves an SCC of the graph.

As a result, we obtain an improved algorithm, which is exponential in
the size of $ G $, but linear in the number of cycles~$ m $.
Furthermore, the time between two detections of a cycle, during the
execution, is bounded linear in the size of the graph~$ G $.

\subsection{Bounded Cycle Synthesis}

We combine the insights of the previous sections to obtain
a synthesis algorithm, which not only bounds the number of states of
the resulting implementation~$ \mealy $ but also bounds the number of
cycles of~$ \mealy $. We use the unfolded trees from the
previous section as our witnesses.

We call a tree that witnesses $ m $ cycles in $ G $, all containing
the root~$ r $ of the tree, a witness-tree~$ \tree_{r,m} $ of $ G $.
Formally, a \emph{witness-tree}~$ \tree_{r,m} $ of $ G = (V,E) $ is a labeled
graph $ \tree_{r,m} = ((W,B\cup R),\tau) $, consisting of a graph
$ (W,B \cup R) $ with $ m = \size{R} $ and a labeling function
$ \tau \colon W \rightarrow V $, such that:

\begin{enumerate}

\item The edges are partitioned into blue edges~$ B $ and red
  edges~$ R $.

  \smallskip

  \label{con:witness_tree_first}

\item All red edges lead back to the root:

  \smallskip

  \quad $ R \subseteq W \times \set{ r } $

  \smallskip

  \label{con:witness_tree_red}  
  
\item No blue edges lead back to the root:
  
  \smallskip

  \quad $ B \cap W \times \set{ r } = \emptyset $
  
  \smallskip

  \label{con:witness_noblue}  

\item Each non-root has at least one blue incoming edge:

  \smallskip

  \quad $ \forall w' \in W \setminus \set{ r }.\ \exists w \in W.\ (w,w')
  \in B $ 

  \smallskip

\item Each vertex has at most one blue incoming edge:

  \smallskip

  \quad $ \forall w_{1},w_{2},w \in W.\ (w_{1},w) \in B \wedge (w_{2},w) \in
  B \Rightarrow w_{1} = w_{2} $

  \smallskip

  \label{con:witness_mostblue}  

\item The graph is labeled by an unfolding of $ G $:

  \smallskip
  
  \quad $ \forall w,w' \in B \cup R.\ (\tau(w),\tau(w')) \in E $, 

  \smallskip

\item The unfolding is complete: 

  \smallskip

  \quad
  $ \forall w \in W.\ \forall v' \in V.\ (\tau(w),v') \in E
  \Rightarrow \exists w' \in W.\ (w,w') \in B \cup R \wedge \tau(w') =
  v' $

  \smallskip

  \label{con:witness_tree_completness}

\item Let $ w_{i}, w_{j} \in W $ be two different vertices that appear
  on a path from the root to a leaf in the $ r $-rooted tree
  $ (W,B) $\footnote{Note that the tree property is enforced by
    Conditions~\ref{con:witness_noblue} --
    \ref{con:witness_mostblue}.}. Then the labeling of $ w_{i} $ and
  $ w_{j} $ differs, i.e., $ \tau(v_{i}) \neq \tau(v_{j}) $.

  \smallskip

  \label{con:witness_tree_nodouble}

\item The root of the tree is the same as the corresponding vertex of
  $ G $, i.e., $ \tau(r) = r $.

  \label{con:witness_tree_last}

\end{enumerate}

\begin{lemma}
  Let $ G = (V,E) $ be a graph consisting of a single SCC, $ r \in V $
  be some vertex of $ G $ and $ m $ be the number of cycles of $ G $
  containing $ r $. Then there is a
  witness-tree~$ \tree_{r,m} = ((W,B \cup R),\tau) $ of $ G $ with
  $ \size{W} \leq m \cdot \size{V} $.
  \label{lem:cycles_to_witness}
\end{lemma}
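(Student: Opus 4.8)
The plan is to make Tiernan's unfolding construction explicit and verify that it produces a witness-tree with the claimed size bound. First I would run the unfolding phase of Algorithm~1 starting at the root $r$: build a tree of vertices $W$ with labels $\tau \colon W \to V$, where the root is labeled $r$, and for every tree-vertex $w$ and every successor $v'$ of $\tau(w)$ in $G$ that does not already occur on the path from the root to $w$ (with the sole exception that $v' = r$ is always allowed), we add a child $w'$ with $\tau(w') = v'$. The edge $(w,w')$ is colored red if $v' = r$ and blue otherwise. This immediately gives Conditions~\ref{con:witness_tree_first}--\ref{con:witness_tree_last}: the red edges all point to (copies of) the root (\ref{con:witness_tree_red}); blue edges never point to the root by construction (\ref{con:witness_noblue}); each non-root vertex is created as the target of exactly one tree edge, which is blue since it is not the root, giving \ref{con:witness_mostblue} and the ``at least one blue incoming edge'' condition; \ref{con:witness_tree_completness} holds because we branch on \emph{all} successors; \ref{con:witness_tree_nodouble} is exactly the no-repetition invariant of the unfolding (with $r$ excluded, but $r$ can only reappear as a red-edge target, which is a leaf, so no path from root to leaf repeats a label); and $\tau(r) = r$ is \ref{con:witness_tree_last}. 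The number of red edges equals the number of distinct simple paths from $r$ back to $r$, i.e. the number of cycles of $G$ through $r$, which is $m$, so $\size{R} = m$ as required.

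The main work is the size bound $\size{W} \le m \cdot \size{V}$. The idea is to charge every vertex of $W$ to one of the $m$ red edges, so that each red edge is charged at most $\size{V}$ vertices. Every vertex $w \in W$ lies on the unique path from the root to $w$ in the blue tree $(W,B)$; since $G$ is a single SCC, from $\tau(w)$ there is a path in $G$ back to $r$, and by completeness (\ref{con:witness_tree_completness}) together with the no-repetition property this path can be followed in the tree until it either reaches a red edge into $r$ or exhausts all vertices — but since we can always choose a path whose internal vertices avoid those already on the root-to-$w$ branch (taking a shortest such return path), we in fact reach a red edge. Hence every $w \in W$ lies on the root-to-leaf branch terminating in some red edge $e$; assign $w$ to $e$. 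The key point is that a single root-to-leaf branch contains at most $\size{V}$ vertices, because by \ref{con:witness_tree_nodouble} the labels along any such branch are pairwise distinct and are drawn from $V$. Therefore each red edge is charged at most $\size{V}$ vertices, and summing over the $m$ red edges gives $\size{W} \le m \cdot \size{V}$.

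I expect the subtle point to be the claim that \emph{every} tree-vertex lies on some branch ending in a red edge — equivalently, that the tree has no ``dead'' branch that ends at a leaf without a red edge back to $r$. This is where the single-SCC hypothesis is essential: if $G$ is strongly connected, then from the label $\tau(w)$ of any tree-vertex $w$ there is always a simple path back to $r$, and one can pick such a path that uses only vertices not yet appearing on the current root-to-$w$ branch (a shortest return path has this property since repeating such a vertex would create a shortcut), so the unfolding extends $w$ along this path all the way to a red edge into $r$. Without strong connectedness this fails and the bound $m \cdot \size{V}$ need not hold — as the paper's earlier remark about $\mealy_n'$ illustrates — so it is worth stating this step carefully. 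Once the dead-branch issue is settled, the charging argument and the verification of Conditions~\ref{con:witness_tree_first}--\ref{con:witness_tree_last} are routine.
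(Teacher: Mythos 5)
Your construction and your check of Conditions~\ref{con:witness_tree_first}--\ref{con:witness_tree_last} follow the paper's intent, but the argument for the size bound breaks exactly at the point you yourself flagged as subtle. The claim that from any tree-vertex $w$ there is a return path to $r$ whose vertices avoid the current root-to-$w$ branch is false, and the ``shortest return path'' justification does not repair it: minimality only excludes repetitions \emph{within} the return path, not overlaps with the branch already traversed. Concretely, take $V=\{r,a,c\}$ with edges $r\to a$, $a\to r$, $a\to c$, $c\to a$. This is a single SCC with exactly $m=1$ cycle through $r$. The unfolding produces the branch labeled $r,a,c$, and from $c$ the only path back to $r$ is $c\to a\to r$, which necessarily revisits the branch vertex $a$. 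Hence the tree-vertex labeled $c$ is a dead leaf lying on no root-to-leaf branch that terminates in a red edge, and your charging map is undefined on it (here the bound still happens to hold numerically, but your argument does not establish it). Relatedly, your verification of Condition~\ref{con:witness_tree_completness} is internally inconsistent: you assert it holds ``because we branch on all successors,'' yet your construction deliberately does \emph{not} branch on successors already occurring on the branch --- and if it did, Condition~\ref{con:witness_tree_nodouble} would fail on exactly the example above.

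The paper's proof of the size bound avoids reachability reasoning entirely and argues from the witness-tree axioms themselves: the depth of $(W,B)$ is at most $\size{V}$ by Condition~\ref{con:witness_tree_nodouble}; a leaf of the blue tree has no outgoing blue edge, but since $G$ is a single SCC its label has at least one successor in $G$, so Condition~\ref{con:witness_tree_completness} forces an outgoing edge, which by Conditions~\ref{con:witness_tree_red} and~\ref{con:witness_noblue} must be red; therefore the blue tree has at most $\size{R}=m$ leaves, and a tree of depth at most $\size{V}$ with at most $m$ leaves has at most $m\cdot\size{V}$ vertices. If you want to keep the charging formulation, charge vertices to leaves of the blue tree rather than to red edges, and obtain ``every leaf carries a red out-edge'' from the completeness axiom of the witness tree rather than re-deriving it from strong connectivity, where --- as the example shows --- it is simply not available.
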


\begin{proof}
  We construct $ \tree_{r,m} $ according to the strategy induced by
  \textit{Algorithm~1}, where an edge is colored red if and only if it
  leads back to the root. The constructed tree satisfies all
  conditions \ref{con:witness_tree_first} --
  \ref{con:witness_tree_last}. By correctness of \textit{Algorithm~1},
  we have that $ \size{R} = m $.

  Now, for the sake of contradiciton, assume
  $ \size{W} > m \cdot \size{V} $. First we observe, that the depth of
  the tree~$ (W,B) $ must be bounded by $ \size{V} $ to satisfy
  Condition~\ref{con:witness_tree_nodouble}. Hence, as there are at
  most $ m $ red edges in $ \tree_{r,m} $, there must be a
  vertex~$ w \in W $ without any outgoing edges. However, since $ G $
  is a single SCC, this contradicts the completeness of $ \tree_{r,m} $
  (Condition~\ref{con:witness_tree_completness}). \qed
\end{proof}

\begin{lemma}
  Let $ G = (V,E) $ be a graph consisting of a single SCC and let
  $ \tree_{r,m} $ be a witness-tree of $ G $. Then there are at most
  $ m $ cycles in $ G $ that contain $ r $.
  \label{lem:witness_to_cycles}
\end{lemma}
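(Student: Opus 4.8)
The plan is to show that the red edges of $\tree_{r,m}$ enumerate, without omission, all cycles through $r$, and that distinct cycles cannot be witnessed by the same red edge. Since $\size{R} = m$, this immediately gives the bound. The key object to exploit is the tree $(W,B)$: by Conditions~\ref{con:witness_noblue}--\ref{con:witness_mostblue} it is genuinely a tree rooted at $r$, so every vertex $w \in W$ has a unique path $r = w_0, w_1, \ldots, w_\ell = w$ from the root in $(W,B)$, and by Condition~\ref{con:witness_tree_nodouble} the labels $\tau(w_0), \ldots, \tau(w_\ell)$ are pairwise distinct vertices of $V$. Combined with Condition~\ref{con:witness_tree_last} ($\tau(r) = r$) and the unfolding Condition~6, this means that whenever $(w,w') \in R$, the sequence $\tau(w_0), \tau(w_1), \ldots, \tau(w_\ell), \tau(w') = r$ traces a genuine cycle of $G$ through $r$: consecutive labels are $E$-edges, and the intermediate labels are all distinct and distinct from $r$. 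So each red edge yields a well-defined simple cycle through $r$.

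First I would make the map ``red edge $\mapsto$ cycle through $r$'' precise, as just sketched, and check it is well-defined (the output is indeed a simple cycle of $G$). Second I would show this map is \emph{surjective}: given an arbitrary simple cycle $r = v_0, v_1, \ldots, v_{k-1}, v_k = r$ of $G$ through $r$, I would use the completeness Conditions~\ref{con:witness_tree_completness} repeatedly to walk a matching path in $\tree_{r,m}$. Starting at the root, completeness guarantees that for each $E$-successor of the current label there is a child realizing it; following the labels $v_1, v_2, \ldots$ in order, I build a path $w_0 = r, w_1, \ldots$ in $W$ with $\tau(w_i) = v_i$. Because the $v_i$ are distinct for $0 \le i < k$, Condition~\ref{con:witness_tree_nodouble} is not violated along this path, so the construction does not get stuck inside the blue tree; when we reach $w_{k-1}$ with $\tau(w_{k-1}) = v_{k-1}$ and need the successor labeled $v_k = r$, the edge realizing it cannot be blue (Condition~\ref{con:witness_noblue}), so it must be a red edge, and that red edge maps back to exactly this cycle. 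Third I would argue \emph{injectivity}: two red edges mapping to the same cycle would trace the same label sequence from the root, hence (by uniqueness of paths in the tree $(W,B)$ plus at-most-one-blue-parent, Condition~\ref{con:witness_mostblue}) determine the same vertices $w_0, \ldots, w_{\ell}$, and then the red edge out of $w_\ell$ back to $r$ is unique because a vertex can have at most one outgoing edge to $r$ realizing a given step — more carefully, the red edge is determined by its source $w_\ell$ and target $r$, and the source is forced by the label sequence. Hence the map is a bijection onto the set of cycles through $r$, and that set has size $\size{R} = m$.

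The main obstacle I anticipate is the surjectivity step, specifically verifying that the path-building process inside the blue tree never fails \emph{before} reaching the final return edge. The subtlety is that completeness (Condition~\ref{con:witness_tree_completness}) only promises \emph{some} child with the right label, but there could in principle be several children with that label, or the child realizing $v_{i+1}$ from $w_i$ might be reached by a red edge prematurely (if $v_{i+1} = r$, which only happens at the last step since the cycle is simple). I would handle this by induction on $i$: maintain the invariant that $w_0, \ldots, w_i$ is a blue path with $\tau(w_j) = v_j$; since $v_{i+1} \ne r$ for $i+1 < k$, the connecting edge given by completeness is not red (Condition~\ref{con:witness_tree_red} would force target $r$), hence blue, extending the invariant. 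A secondary technical point is ruling out that the claimed map could identify a cycle with a ``rotated'' version of itself — but since we always anchor at the root $r$ and $G$'s cycles through $r$ have a canonical starting point, this does not arise. These are all routine once the invariant is stated correctly.
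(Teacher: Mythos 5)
Your proposal is correct and follows essentially the same route as the paper: use completeness to trace each simple cycle through $r$ as a path in the blue tree whose intermediate edges must be blue (red edges target only the root) and whose closing edge must be red (blue edges never target the root), so the tree structure forces at least one distinct red edge per cycle, bounding the cycle count by $\size{R}=m$. The additional injectivity/bijectivity discussion is not needed for the ``at most $m$'' bound (surjectivity of the red-edge-to-cycle map already suffices), but it does no harm.
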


\begin{proof}
  Let $ \tree_{r,m} = ((W,R \cup B),\tau) $. Assume for the sake of
  contradiction that $ G $ has more than $ m $ cycles and let
  $ c = (C,\eta) $ be an arbitrary such cycle. By the completeness of
  $ \tree_{r,m} $, there is path $ w_{0}w_{1}\ldots w_{\size{C}-1} $
  with $ w_{0} = r $ and $ \tau(w_{i}) = \eta^{i}(r) $ for all
  $ 0 \leq i < \size{C} $. From $ w_{i} \neq r $ and
  Condition~\ref{con:witness_tree_red}, it follows
  $ (w_{i-1},w_{i}) \in B $ for all $ 0 < i < \size{C} $. Further,
  $ \eta^{\size{C}}(r) = r $ and thus
  $ (w_{\size{C}-1},w_{0}) \in R $. Hence, by the tree shape of
  $ (W,B) $, we get $ \size{R} > m $, yielding the desired
  contradiction. \qed
\end{proof}

\noindent From Lemma~\ref{lem:cycles_to_witness}
and~\ref{lem:witness_to_cycles} we derive that $ \tree_{r,m} $ is a
suitable witness to bound the number of cycles of an
implementation~$ \mealy $. Furthermore, from
Lemma~\ref{lem:cycles_to_witness} we also obtain an upper bound on the
size of $ \tree_{r,m} $.

\bigskip

\noindent We proceed with our final encoding. Therefore, we first
construct a simple directed graph~$ G $ out of the
implementation~$ \mealy $.  Then, we guess all the sub-graphs,
obtained from $ G $ via iteratively removing vertices, and split them
into their corresponding SCCs. Finally, we guess the witness-tree for
each such SCC.

To keep the final SAT encoding compact, we even introduce some further
optimizations. First, we do not need to introduce a fresh copy for
each SCC, since the SCC of a vertex is always unique. Thus, it
suffices to guess an annotation for each vertex, being unique for each
SCC. Second, we have to guess $ n $ trees~$ \tree_{i,r_{i}} $, each
one consisting of at most $ i \cdot n $ vertices, such that the sum of
all $ i $ is equal to the overall number of cycles~$ m $. One possible
solution would be to overestimate each $ i $ by $ m $. Another
possibility would be to guess the exact distribution of the cycles
over the different witness-trees~$ \tree_{i,r_{i}} $. However, there
is a smarter solution: we guess all trees together in a single graph
bounded by $ m \cdot n $. Additionally, to avoid possible
interleavings, we add an annotation of each vertex by its
corresponding witness-tree~$ \tree_{i,r_{i}} $. Hence, instead of
bounding the number of each $ \tree_{i,r_{i}} $ separately by~$ i $,
we just bound the number of all red edges in the whole forest by~$ m
$.
This way, we not only reduce the size of the encoding, but also skip
the additional constrains, which would be necessary to sum the
different witness-tree bounds~$ i $ to $ m $, otherwise.

\medskip

\noindent Let $ T $ be some ordered set with $ \size{T} = n $ and
$ S = T \times \set{ 1,2,\ldots, m } $. We use $ T $ to denote the
vertices of $ G $ and $ S $ to denote the vertices of the forest of
$ \tree_{i,r_{i}} $\,s. Further, we use $ M = T \times \set{ 1 } $ to
denote the roots and $ N = S \setminus M $ to denote the non-roots of
the corresponding trees.  We introduce the following variables:

\begin{itemize}

\item $ \edge{t}{t'} $ for all $ t,t' \in T $, denoting the edges of
  the abstraction of $ \mealy $ to $ G $.

\item $ \bedge{s}{s'} $ for all $ s \in S $ and
  $ s' \in N $, denoting a blue edge.

\item $ \redge{s}{s'} $ for all $ s \in S $ and $ s' \in M $, denoting a
  red edge.

\item $ \wtreed{s}{i} $ for all $ s \in S $, $ 0 < i \leq \log n $,
  denoting the witness-tree of each $ s $. As
  before, we use $ \wtree{s} \circ x $ to relate values with the
  underlying encoding.

\item $ \allowed{s}{t} $ for all $ s \in S $ and $ t \in T $, denoting
  the set of all vertices $ t $, already visited at $ s $, since
  leaving the root of the corresponding witness-tree.

\item $ \rboundd{c}{i} $ for all $ 0 < c \leq m $,
  $ 0 < i \leq \log (n \cdot m) $, denoting an ordered list of all red
  edges, bounding the red edges of the forest.

\item $ \sccd{k}{t}{i} $ for all $ 0 < k \leq n $, $ t \in T, $ and
  $ 0 \leq i < \log n $, denoting the SCC of $ t $ in the $ k $-th
  sub-graph of $ G $. The sub-graphs are obtained 
  by iteratively removing vertices of $ T $, according to the
  pre-defined order. This way, each sub-graph contains exactly all
  vertices that are larger than the root.
 
\end{itemize}

\noindent Note that by the definition of $ S $ we introduce $ m $
explicit copies for each vertex of~$ G $. This is sufficient, since
each cycle contains each vertex at most once. Thus, the labeling
$ \tau $ of a vertex $ s $ can be directly derived from the first
component of~$ s $.

 Given a universal co-Büchi automaton~$ \aut $, a bound $ n $ on the
states of the resulting implementation~$ \mealy $, and a bound $ m $
on the number of cycles of $ \mealy $, we encode the Bounded Cycle
Synthesis problem via the SAT formula
$ \curlyF_{BS}(\aut,n) \wedge \curlyF_{CS}(\aut,n,m) \wedge
\curlyF_{SCC}(n) $.
The constraints of $ \curlyF_{CS}(\aut,n,m) $, bounding the cycles
of the system, are given by Table~\ref{tab:constraints}. The
constraints of $ \curlyF_{SCC}(n) $, enforcing that each
vertex is labeled by a unique SCC, is given in
Appendix~\ref{apx:encoding}.

\begin{theorem}
  For each pair of bounds $ n, m \in \nats $ and each universal
  co-Büchi automaton~$ \aut $ with $ \size{\aut} = k $, the
  formula~$ \curlyF = \curlyF_{BS}(\aut,n) \wedge
  \curlyF_{CS}(\aut,n,m) \wedge \curlyF_{SCC} $
  is satisfiable if and only if there is a Mealy machine~$ \mealy $
  with $ \size{\mealy} = n $ and
  \mbox{$ \size{\mathcal{C}(\mealy)} = m $}, accepted by $ \aut $.
  Furthermore, $ \curlyF $ consists of $ x $ variables with
  \mbox{$ x \in \bigo\hspace{0.5pt}(n^{3} \!+ n^{2}(m^{2}\!+
    2^{\size{\inputs}}) \hspace{-1.5pt}+\hspace{-1pt} n
    \size{\outputs}\hspace{-1.5pt}+\hspace{-1pt} nk \, \log(nk)) $}
  and
  \mbox{$ \size{\curlyF} \in \bigo\hspace{0.5pt}(n^{3}
    \!+\hspace{-0.5pt} n^{2}(m^{2} \!+ k \size{\Sigma})) $}.
\end{theorem}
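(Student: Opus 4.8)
The plan is to prove the two directions of the equivalence separately and then to tally variables and clauses for the size claims. Throughout, the Bounded Synthesis theorem is used as a black box: a satisfying assignment of $\curlyF_{BS}(\aut,n)$ is interconvertible with a Mealy machine $\mealy$ of size $n$ accepted by $\aut$, together with a valid run-graph annotation. Hence in both directions the real content is to relate the remaining conjuncts $\curlyF_{CS}(\aut,n,m)$ and $\curlyF_{SCC}$ to the statement $\size{\mathcal{C}(\mealy)}=m$, using Lemmas~\ref{lem:cycles_to_witness} and~\ref{lem:witness_to_cycles}, the fact that a simple cycle never leaves an SCC, and the correctness of \emph{Algorithm~1}~\cite{Tiernan:1970}.

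For the ``if'' direction, fix $\mealy$ with $\size{\mealy}=n$, $\size{\mathcal{C}(\mealy)}=m$, accepted by $\aut$, and let $G$ be the graph obtained from $\mealy$ by forgetting the input labels. I would set $\edge{\cdot}{\cdot}$ to the edge relation of $G$, and $\sccd{k}{\cdot}{\cdot}$ to the SCC partition of the $k$-th subgraph $G_k$ (the restriction of $G$ to the vertices ranked at least $k$), which models $\curlyF_{SCC}$. Following \emph{Algorithm~1}, each vertex $r$ is the root exactly in the subgraph $G_r$ in which it is minimal, and inside $r$'s SCC of $G_r$ Lemma~\ref{lem:cycles_to_witness} supplies a witness-tree $\tree_{r,m_r}$ with at most $m_r\cdot n$ vertices, $m_r$ being the number of cycles of that SCC through $r$. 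The cycle sets counted at the successive roots partition $\mathcal{C}(\mealy)$ --- each simple cycle is certified exactly once, when its minimal vertex becomes the root and, by the SCC reduction, within that vertex's SCC --- so $\sum_r m_r=m$ and the total vertex count is $\sum_r m_r\cdot n\le m\cdot n=\size{S}$. The witness-trees thus embed disjointly into the forest on $S=T\times\set{1,\ldots,m}$: $\wtreed{s}{\cdot}$ records which tree contains $s$, the first component of $s$ plays the role of its $\tau$-label, $\bedge{\cdot}{\cdot}$ and $\redge{\cdot}{\cdot}$ come from the tree edges, $\allowed{s}{t}$ is the set of labels on the path from the root to $s$, and $\rboundd{\cdot}{\cdot}$ enumerates the $m$ red edges. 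One point worth checking is that $m$ copies per vertex in $S$ suffice: within a witness-tree the occurrences of a fixed vertex are pairwise incomparable (otherwise Condition~\ref{con:witness_tree_nodouble} fails), and each dominates a distinct red edge, so a vertex occurs at most $\sum_r m_r=m$ times over the whole forest. A routine inspection against Table~\ref{tab:constraints} then confirms a model of $\curlyF_{CS}(\aut,n,m)$.

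For the ``only if'' direction, take a model of $\curlyF$. Its restriction to $\curlyF_{BS}(\aut,n)$ yields, via the Bounded Synthesis theorem, a Mealy machine $\mealy$ with $\size{\mealy}=n$ accepted by $\aut$, so it remains to show $\size{\mathcal{C}(\mealy)}=m$. The $\curlyF_{SCC}$ constraints force $\sccd{k}{\cdot}{\cdot}$ to describe the genuine SCC partition of each $G_k$, and $\wtreed{\cdot}{\cdot}$ cuts the forest on $S$ into graphs $\tree_{r,\cdot}$, one per root $r$; Table~\ref{tab:constraints} is built so that each such graph is a witness-tree in the sense of Conditions~\ref{con:witness_tree_first}--\ref{con:witness_tree_last} \emph{for $r$'s SCC inside $G_r$} --- the $\allowed{\cdot}{\cdot}$ variables enforcing the no-repetition Condition~\ref{con:witness_tree_nodouble}, and the $\edge{\cdot}{\cdot}$ and $\sccd{\cdot}{\cdot}{\cdot}$ variables enforcing completeness (Condition~\ref{con:witness_tree_completness}) relative to that SCC. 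By Lemma~\ref{lem:witness_to_cycles}, $\tree_{r,\cdot}$ bounds by its number of red edges the cycles through $r$ in that SCC, i.e., the cycles of $\mealy$ whose minimal vertex is $r$; by completeness every such cycle is represented, so the bound is tight tree by tree, and since $\rboundd{\cdot}{\cdot}$ lists all red edges of the forest in exactly $m$ slots, $\size{\mathcal{C}(\mealy)}=m$ follows. I expect the main obstacle to be precisely this composition step: establishing that the ``delete the minimal vertex'' schedule of \emph{Algorithm~1} is mirrored faithfully by the family $\sccd{k}{\cdot}{\cdot}$, so that every simple cycle of $\mealy$ is certified in exactly one witness-tree and none twice, hence the per-tree statements of Lemmas~\ref{lem:cycles_to_witness} and~\ref{lem:witness_to_cycles} genuinely sum to a count of $\mathcal{C}(\mealy)$.

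Finally, the size bounds follow by a direct count of the variable families defined above and of the clauses of $\curlyF_{BS}$, Table~\ref{tab:constraints} and Appendix~\ref{apx:encoding}. The variable count is dominated by the $n^{2}2^{\size{\inputs}}$ transition variables $\trans{\cdot}{\cdot}{\cdot}$, the $\bigo(n^{2}m^{2})$ blue-edge variables $\bedge{\cdot}{\cdot}$ (ranging over $\size{S}=nm$ and $\size{N}<nm$ endpoints), the $\bigo(nk\log(nk))$ annotation variables $\annotationd{\cdot}{\cdot}{\cdot}$, the $n\,2^{\size{\inputs}}\,\size{\outputs}$ labeling variables $\tlabel{\cdot}{\cdot}{\cdot}$, and the $\bigo(n^{3})$ auxiliary variables of $\curlyF_{SCC}$; the families $\rgstate{\cdot}{\cdot}$, $\edge{\cdot}{\cdot}$, $\redge{\cdot}{\cdot}$, $\wtreed{\cdot}{\cdot}$, $\allowed{\cdot}{\cdot}$, $\rboundd{\cdot}{\cdot}$ and $\sccd{\cdot}{\cdot}{\cdot}$ are subsumed, and using $\size{\outputs}\le n$ this yields $x\in\bigo(n^{3}+n^{2}(m^{2}+2^{\size{\inputs}})+n\size{\outputs}+nk\log(nk))$. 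The clause count is dominated by the $\bigo(n^{2}k\size{\Sigma})$ clauses of the annotation-transfer constraint of $\curlyF_{BS}$, the $\bigo(n^{2}m^{2})$ clauses relating $\bedge{\cdot}{\cdot}$, $\allowed{\cdot}{\cdot}$ and $\wtreed{\cdot}{\cdot}$ in $\curlyF_{CS}$, and the $\bigo(n^{3})$ clauses of $\curlyF_{SCC}$, giving $\size{\curlyF}\in\bigo(n^{3}+n^{2}(m^{2}+k\size{\Sigma}))$.
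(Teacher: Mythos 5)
Your proposal is correct and takes the route the paper intends: the paper states this theorem without an explicit proof, and you assemble one from precisely its own ingredients --- the Bounded Synthesis theorem for $\curlyF_{BS}$, Lemmas~\ref{lem:cycles_to_witness} and~\ref{lem:witness_to_cycles} applied per root within the SCCs of the successive subgraphs, Tiernan's deletion schedule to partition $\mathcal{C}(\mealy)$ by the minimal vertex of each cycle, and a direct tally of the variable and clause families for the size bounds. The one caveat (inherited from the paper's formulation rather than introduced by you) is that the red-edge-list constraints literally enforce only $\size{\mathcal{C}(\mealy)} \leq m$ rather than equality, since nothing forces every slot of the ordered red-edge list to be occupied by an actual red edge; your closing step of the ``only if'' direction silently makes the same move the paper does.
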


\begin{table}[H]
  \centering
\caption{Constraints of the SAT formula~$ \curlyF_{CS}(\aut, n, m) $.}
\label{tab:constraints}
\renewcommand{\arraystretch}{1.3}
\begin{tabular}{|>{\centering}m{0.17\textwidth} m{0.43\textwidth} | >{\small}m{0.4\textwidth-12pt} |}
  \hline
  $ \bigwedge\limits_{t,t' \in T, \nu \in 2^{I}} $ & $ \trans{t}{\nu}{t'} \rightarrow \edge{t}{t'} $ & 
  {\multirow{2}{0.36\textwidth}{\centering \raisebox{-6pt}{Construction of $ G $ from $ \mealy $.}}}
  \\ \cline{1-2}
  $ \bigwedge\limits_{t,t' \in T} $ & $ \edge{t}{t'} \rightarrow \bigvee\limits_{\nu \in 2^{I}} \trans{t}{\nu}{t'} $ & 
  \\ \hline
  $ \bigwedge\limits_{r \in T} $ & $ \wtree{(r,1)} = r $
  & Roots indicate the witness-tree.
  \\ \hline
  $ \bigwedge\limits_{s \in S,\, (r,1) \in M} $ & $\redge{s}{(r,1)} 
  \rightarrow \wtree{s} = r $
  & Red edges only connect vertices of the current 
    $ \tree_{i,r_{i}} $. 
  \\ \hline
  $ \bigwedge\limits_{s \in S,\, s' \in N} $ & $ \bedge{s}{s'} \newline
                                               \mbox{\quad} \rightarrow \wtree{s} = \wtree{s'} $ 
  & Blue edges only connect vertices of the
    current $ \tree_{i,r_{i}} $.
  \\ \hline
  $ \bigwedge\limits_{s' \in N} $ & $ \textit{exactlyOne}(\newline \mbox{\quad} \set{ \bedge{s}{s'} \mid s \in S }\ ) $
  & Every non-root has exactly one blue incoming edge.
  \\ \hline
  $ \bigwedge\limits_{(t,c) \in S,\, r \in T,} $ & $ \redge{(t,c)}{(r,1)} \rightarrow \edge{t}{r} $
  & Red edges are related to the edges of the graph~$ G $.
  \\ \hline
  $ \bigwedge\limits_{(t,c) \in S,\, (t',c') \in N} $ & $ \bedge{(t,c)}{(t',c')} \rightarrow \edge{t}{t'} $  
  & Blue edges are related to the edges of the graph~$ G $.
  \\ \hline
  $ \bigwedge\limits_{\substack{(t,c) \in S,\, r \in T,\\ t \geq r}}  $ &
  $ \edge{t}{r} \wedge \scc{r}{t} = \scc{r}{r} \wedge \newline \wtree{(t,c)} = r \newline \mbox{\quad} \rightarrow \redge{(t,c)}{(r,1)} $ & Every possible red edge must be taken.
  \\ \hline
  $ \bigwedge\limits_{\substack{(t,c) \in S,\, r,t' \in T,\\ t \geq t'}}  $ &
  $ \edge{t}{t'} \wedge \scc{r}{t} = \scc{r}{t'} \wedge \newline \wtree{(t,c)} = r \wedge \allowed{(t,c)}{t'} \newline \mbox{\quad} \rightarrow \bigvee\limits_{0 < c' \leq m} \bedge{(t,c)}{(t',c')} $ & Every possible blue edge must be taken.
  \\ \hline
  $ \bigwedge\limits_{r \in T} $ & $ \bigwedge\limits_{t \leq r} \neg \allowed{(r,1)}{t} \wedge \newline \bigwedge\limits_{t > r} \allowed{(r,1)}{t} $ & 
  Only non-roots of the corresponding sub-graph can be successors of a root.
  \\ \hline
  $ \bigwedge\limits_{(t,c) \in S,\, s \in N} $ & $ \bedge{(t,c)}{s} \newline \mbox{\ \ } \rightarrow \neg \allowed{s}{t} \wedge \newline \mbox{\quad \quad\,} (\allowed{s}{t'} \newline \mbox{\qquad\quad} \leftrightarrow \allowed{(t,c)}{t'}) $ & Every vertex appears at most once on a path from the root to a leaf. 
  \\ \hline

  $ \bigwedge\limits_{s \in S,\, s' \in M} $ & $ \redge{s}{s'} \newline \mbox{\ \ } \rightarrow \bigvee\limits_{0 < c \leq m} \rbound{c} = f(s) $ & The list of red edges is complete. ($ f(s) $ maps each state of $ S $ to a unique number in $ \set{ 1,\ldots,n\cdot m } $)
  \\ \hline
  $ \bigwedge\limits_{0 < c \leq m} $ & $ \rbound{c} < \rbound{c + 1} $ & Red edges are strictly ordered.
  \\ \hline
\end{tabular}

\renewcommand{\arraystretch}{1}
\end{table}

\section{Experimental Results}
\label{sec_results}
We have implemented the Bounded Cycle Synthesis approach in our tool
\textit{Bo\!WSer}, the Bounded Witness Synthesizer, and compared it
against standard Boun\-ded Synthesis and
\textit{Acacia+}~(v2.3)~\cite{Filiot:2011,Filiot:2013}. To ensure a
common encoding, we used \textit{Bo\!WSer} for both, the Bounded
Synthesis and the Bounded Cycle Synthesis approach. Our tool uses
\textit{LTL3BA}~(v1.0.2)~\cite{Babiak:2012} to convert specifications
to universal co-Büchi automata. The created SAT queries are solved by
\textit{MiniSat}~(v.2.2.0)~\cite{Een:2003} and
\textit{clasp}~(v.3.1.4)~\cite{Gebser:2007}, where the result of the
faster solver is taken.

The benchmarks are given in TLSF~\cite{Jacobs:2016} and represent a
decomposition of \textsc{Arm}'s \textit{Advanced Microcontroller Bus
  Architecture} (AMBA)~\cite{Amba:1999}. They are created from the
assumptions and guarantees presented in~\cite{Jobstmann:2007}, which
were split into modules, connected by new signals. A detailed
description of the benchmarks is given in~\cite{Jacobs:2016}.

All experiments were executed on a Unix machine, operated by a 64-bit
kernel (v4.1.12) running on an Intel Core i7 with 2.8GHz and 8GB
RAM. Each experiment had a time limit of 1000 seconds and a memory
limit of 8GB. When counting cycles of a solution, the limit was set to
10000000 cycles.

\newbox{\autUCTb}
\savebox{\autUCTb}{\ensuremath{\aut_{\scalebox{0.9}{\ensuremath{UCT}}}}}
\newcommand{\autUCT}{\ensuremath{\usebox{autUCTb}}}

\begin{table}[t]
  \centering
  \caption{Results of the tools \textit{LTL3BA}, \textit{Aca(cia)+} and \textit{BoWSer}. 
    The \textit{LTL3BA} tool was used to generate the universal co-Büchi 
    tree automata $ \aut_{\protect\scalebox{0.5}{\ensuremath{UCT}}} $. The Bo(unded) Sy(nthesis) and 
    Bo(unded) Cy(cle Synthesis) encodings were generated 
    by BoWSer.}
  \label{tab:results}
  \renewcommand{\arraystretch}{1.3}
  \scalebox{0.857}{
    \begin{tabular}{|l||c||c|c||c|c|c||c|c|c||c|c|}
      \hline
      \multirow{3}{*}{\textbf{Benchmark}}
      & \multicolumn{3}{c||}{\textbf{Size}} 
      & \multicolumn{3}{c||}{\textbf{Cycles}}
      & \multicolumn{5}{c|}{\textbf{Time (s)}} \\
      \cline{2-12}
      & \multirow{2}{*}{$ \aut_{\scalebox{0.5}{\ensuremath{UCT}}} $} & \multirow{2}{*}{Aca+} & BoSy/
      & \multirow{2}{*}{Aca+} & \multirow{2}{*}{BoSy}  & \multirow{2}{*}{BoCy}
      & \multirow{2}{*}{\,Aca+\,} & \multicolumn{2}{c|}{SAT} & \multicolumn{2}{c|}{UNSAT} \\
      \cline{9-12}
      & & & BoCy & & & & & BoSy      & BoCy       & BoSy      & BoCy   \\

      \cline{1-7} \cline{9-12}

      \hline
      \hline
      ARBITER[2]         & 6 & 26   & \cb 2 & 5439901       & \cb 3  & \cb 3  & \cb 0.261  & 0.847     & 0.868     & 0.300  & 0.836   \\
      \hline
      ARBITER[3]         & 20 & 111  & \cb 3 & \textcolor{red!70!black}{$ > 9999999 $} & 8      & \cb 4  & \cb 0.511  & 9.170     & 9.601     & 3.916  & 9.481   \\
      \hline
      ARBITER[4]         & 64 & 470  & \cb 4 & \textcolor{red!70!black}{$ > 9999999 $} & 8      & \cb 5  & \cb 12.981 & 105.527   & 109.180   & 56.853 & 106.803 \\
      \hline
      LOCK[2]            & 12 & 4    & \cb 3 & 12            & 6      & \cb 5  & 0.459      & \cb 0.395 & 0.522     & 0.165  & 0.487   \\ 
      \hline
      LOCK[3]            & 20 & 4    & \cb 3 & 12            & \cb 5  & \cb 5  & 55.917     & \cb 1.037 & 1.245     & 0.433  & 1.107   \\
      \hline
      LOCK[4]            & 36 & --  & \cb 3 & --  & 6      & \cb 5  & \textcolor{red!70!black}{$ > 999 $}  & \cb 4.419 & 4.761 & 1.407  & 3.726   \\    
      \hline
      ENCODE[2]          & 3 & 6    & \cb 2 & 41            & \cb 3  & \cb 3  & 0.473      & \cb 0.071 & 0.089      & 0.048  & 0.084   \\
      \hline
      ENCODE[3]          & 5 & 16   & \cb 3 & 90428         & \cb 8  & \cb 8  & 1.871      & \cb 0.292 & 0.561      & 0.200  & 0.503   \\
      \hline
      ENCODE[4]          & 5 & 20   & \cb 4 & \textcolor{red!70!black}{$ > 9999999 $} & \cb 24 & \cb 24 & 4.780      & \cb 1.007 & 16.166 & 0.579  & \textcolor{red!70!black}{$ > 999 $}  \\ 
      \hline
      DECODE             & 1 & 4    & \cb 1 & 8             & \cb 1  & \cb 1  & 0.328      & 0.055     & \cb 0.051  & --      & --       \\
      \hline
      SHIFT              & 3 & 6    & \cb 2 & 31            & \cb 3  & \cb 3  & 0.387      & \cb 0.060 & 0.072      & 0.041  & 0.071   \\
      \hline 
      TBURST4            & 103 & 14   & \cb 7 & 61            & 19     & \cb 7  & \cb 0.634  & 8.294     & 206.604  & 6.261  & \textcolor{red!70!black}{$ > 999 $} \\
      \hline
      TINCR              & 43 & 5    & \cb 3 & 7             & 5      & \cb 2  & \cb 0.396  & 2.262     & 2.279        & 0.845  & 2.221   \\
      \hline
      TSINGLE            & 22 & 8    & \cb 4 & 12            & 5      & \cb 4  & \cb 0.372  & 1.863     & 2.143      & 1.165  & 2.067   \\
      \hline
    \end{tabular}
  }
  \renewcommand{\arraystretch}{1}
\end{table}

\medskip

\noindent The results of the evaluation are shown in
Table~\ref{tab:results}, which displays the sizes of the intermediate
universal co-Büchi tree
automata~$ \aut_{\protect\scalebox{0.5}{\ensuremath{UCT}}} $, the
sizes of the synthesized implementations~$ \mealy $, the number of
cycles of each implementation~$ \mealy $, and the overall synthesis
time. Thereby, for each instance, we guessed the minimal number of
states for the Bounded Synthesis approach and, additionally, the
minimal number of cycles for the Bounded Cycle Synthesis approach, to
obtain a single satisfiable instance. Further, to verify the result,
we also created the unsatisfiable instance, where the state bound was
decreased by one in the case of Bounded Synthesis and the cycle bound
was decreased by one in the case of Bounded Cycle Synthesis. Note that
these two instances already give an almost complete picture, since for
increased and decreased bounds the synthesis times behave
monotonically. Hence, increasing the bound beyond the first realizable
instance increases the synthesis time. Decreasing it below the last
unsatisfiable instance decreases the synthesis time.  The results for
the TBURST4 component are additionally depicted in
Figure~\ref{fig:exampleintro}.

On most benchmarks, Acacia+ solves the synthesis problem the fastest,
followed by Bounded Synthesis and our approach. (On some benchmarks,
Bounded Synthesis outperforms Acacia+.)  Comparing the running times
of Bounded Synthesis and Bounded Cycle Synthesis, the overhead for
bounding the number of cycles is insignificant on most benchmarks.
The two exceptions are ENCODE, which requires a fully connected
implementation, and TBURST4, where the reduction in the number of
cycles is substantial. In terms of states and cycles, our tool outperforms Bounded Synthesis
on half of the benchmarks and it outperforms Acacia+ on all
benchmarks. 

The results of Acacia+ show that the number of cycles is indeed an
explosive factor. However, they also show that this explosion can be
avoided effectively.

\section{Conclusions}
\label{sec_conclusions}
We have introduced the Bounded Cycle Synthesis problem, where we limit
the number of cycles in an implementation synthesized from an LTL
specification.  Our solution is based on the construction of a witness
structure that limits the number of cycles. The existence of such a
witness can be encoded as a SAT problem. Our experience in applying
Bounded Cycle Synthesis to the synthesis of the AMBA bus arbiter shows
that the approach leads to significantly better
implementations. Furthermore, the performance of our prototype
implementation is suffient to synthesize the components (in a natural
decomposition of the specification) in reasonable time.

Both Bounded Synthesis and Bounded Cycle Synthesis can be seen as the
introduction of structure into the space of implementations. Bounded
Synthesis structures the implementations according to the number of
states, Bounded Cycle Synthesis additionally according to the number
of cycles.  The double exponential blow-up between the size of the
specification and the number of states, and the triple exponential
blow-up between the size and the number of cycles indicate that, while
both parameters provide a fine-grained structure, the number of cycles
may even be the superior parameter.  Formalizing this intuition and
finding other useful parameters is a challenge for future work.

Our method does not lead to a synthesis algorithm in the classical
sense, where just a specification is given and an implementation or an
unsatisfiability result is returned. In our setting, the bounds are
part of the input, and have to be determined beforehand.
In Bounded Synthesis, the bound is usually eliminated by increasing the
bound incrementally. With multiple bounds, the choice which parameter
to increase becomes non-obvious. Finding a good strategy for this problem is a
challenge on its own and beyond the scope of this paper. We
leave it open for future research.

\goodbreak

\bibliographystyle{splncs03}
\bibliography{biblio}

\newpage
\appendix
\section{SCC Encoding}
\label{apx:encoding}

In the following, we describe the SAT encoding to guess the SCC
annotations $ \mathcal{F}_{SCC}(n) $ for each sub-graph of a given
graph $ G $, induced by the bound $ n \in \nats $.  Therefore, we fix
a vertex~$ v $ in each SCC and guess two spanning trees, rooted in
$ v $, with the second one being inverted, i.e., the edges lead back
to the root. This ensures, that from the vertex~$ v $ each other
vertex is reachable and from each other vertex, $ v $ is
reachable. The corresponding spanning trees then are witnesses for the
guessed SCCs. To verify that we guessed maximal SSCs, we finally
enforce that the DAG of all SCCs is totally ordered and that edges
have to follow that order.

\medskip

\noindent Let $ T $ be some set with $ \size{T} = n $. We introduce
the following variables for each sub-graph $ 0 < k \leq n $:

\begin{itemize}

\item $ \forwardedge{k}{t}{t'} $ for all $ t,t' \in T $, for the
  edges of the first spanning tree.

\item $ \backwardedge{k}{t}{t'} $ for all $ t,t' \in T $, for the
  edges of the second spanning tree.

\item $ \frankd{k}{t}{i} $ for all $ t \in T $ and
  $ 0 < i \leq \log n $, denoting a ranking function that measures the
  distance from the root of the forward spanning tree.

\item $ \brankd{k}{t}{i} $ for all $ t \in T $ and
  $ 0 < i \leq \log n $, denoting a ranking function that measures the
  distance to the root of the backward spanning tree.

\end{itemize}

\noindent We guss and verify the SCC annotation according to the
following constraints:

\begin{itemize}

\item The SCCs are totally ordered:
  \begin{equation*}
    \bigwedge\limits_{t,t' \in T} \edge{t}{t'} \rightarrow \scc{k}{t} \leq \scc{k}{t'}
  \end{equation*}

\item Each SCC has an SCC root, annotated with the smallest ranking:
  \begin{equation*}
    \bigwedge\limits_{0 < i \leq n} \Big(\big(\bigvee\limits_{t \in T} \scc{k}{t} = i\big) \rightarrow \big(\bigvee\limits_{t \in T} (\scc{k}{t} = i \wedge \frank{k}{t} = 0)\big)\Big) 
  \end{equation*}

\item An SCC root is unique:
  \begin{equation*}
    \bigwedge\limits_{t,t' \in T, \, t \neq t'} \neg \big( \scc{k}{t} = \scc{k}{t'} \wedge \frank{k}{t} = 0 \wedge \frank{k}{t'} = 0 \big)
  \end{equation*}

\item The root is the same according to both rankings:
  \begin{equation*}
    \bigwedge\limits_{t \in T} \frank{k}{t} = 0 \leftrightarrow \brank{k}{t} = 0 
  \end{equation*}

\item SCC roots do not have incoming forward edges nor outgoing
  backward edges:
  \begin{equation*}
    \bigwedge\limits_{t' \in T} \frank{k}{t'} = 0 \rightarrow \bigwedge\limits_{t \in T} \neg \forwardedge{k}{t}{t'} \wedge \neg \backwardedge{k}{t}{t'} 
  \end{equation*}

\item All non-roots have exactely one incoming forward edge:
  \begin{equation*}
    \bigwedge\limits_{t' \in T} \frank{k}{t'} \neq 0 \rightarrow \textit{exactlyOne}(\set{ \forwardedge{k}{t}{t'} \mid t \in T })
  \end{equation*}

\item All non-roots have exactely one outgoing backward edge:
  \begin{equation*}
    \bigwedge\limits_{t' \in T} \brank{k}{t'} \neq 0 \rightarrow \textit{exactlyOne}(\set{ \backwardedge{k}{t'}{t} \mid t \in T })
  \end{equation*}

\item Forward edges preserve the ranking:
  \begin{equation*}
    \bigwedge\limits_{t,t' \in T} \forwardedge{k}{t}{t'} \rightarrow \frank{k}{t} < \frank{k}{t'}
  \end{equation*}

\item Backward edges preserve the ranking:
  \begin{equation*}
    \bigwedge\limits_{t,t' \in T} \backwardedge{k}{t}{t'} \rightarrow \brank{k}{t} > \brank{k}{t'}
  \end{equation*}

\item Only edges of the same SCC can be connected by a forward edge:
  \begin{equation*}
    \bigwedge\limits_{t,t' \in T} \forwardedge{k}{t}{t'} \rightarrow \edge{t}{t'} \wedge \scc{k}{t} = \scc{k}{t'} 
  \end{equation*}

\item Only edges of the same SCC can be connected by a backward edge:
  \begin{equation*}
    \bigwedge\limits_{t,t' \in T} \backwardedge{k}{t}{t'} \rightarrow \edge{t}{t'} \wedge \scc{k}{t} = \scc{k}{t'} 
  \end{equation*}

\end{itemize}

\noindent The resulting formula is quadratic in $ n $ for each
$ 0 < k \leq n $ and consists of $ n^{2} $ many variables. Hence, the
overall formula~$ \curlyF_{SCC}(n) $ consists of at most $ n^{3} $
many variables and $ \size{\curlyF_{SCC}(n)} \in \bigo(n^{3}) $.

\end{document}